\documentclass[aps,pre,onecolumn,superscriptaddress,amsmath,amssymb]{revtex4-2}

\usepackage{graphicx}
\usepackage{amsmath}
\usepackage{amssymb}
\usepackage{amsfonts}
\usepackage{amsthm}
\usepackage[T1]{fontenc}
\usepackage[]{xcolor}
\usepackage{diagbox}
\usepackage{mathrsfs}
\usepackage{booktabs}
\usepackage{multirow}
\usepackage{tikz}
\usepackage[hypertexnames=false]{hyperref}
\usepackage{enumitem}
\usepackage{float}
\newcommand \bl{\color{blue}}
\newcommand \rd{\color{red}}
\newcommand \gr{\color{green}}
\newcommand{\su}[1]{\underline{#1}}
\newcommand{\du}[1]{\underline{\underline{#1}}}
\newcommand{\tu}[1]{\underline{\underline{\underline{#1}}}}

\newtheorem{theorem}{Theorem}[section]
\newtheorem{lemma}[theorem]{Lemma}
\newtheorem{proposition}[theorem]{Proposition}
\newtheorem{remark}[theorem]{Remark}
\newtheorem{definition}[theorem]{Definition}
\newtheorem{example}[theorem]{Example}

\begin{document}

\title{Invariant Measures for Soliton Systems Generated by Mealy Automata}

\author{Takahiro Kanazawa}
\affiliation{Department of Physics, The University of Tokyo, 7-3-1 Hongo, Bunkyo-ku, Tokyo 113-0033, Japan}
\affiliation{Nonequilibrium Physics of Living Matter Laboratory, RIKEN Pioneering Research Institute, 2-1 Hirosawa, Wako 351-0198, Japan}
\affiliation{Department of Physics, The University of Chicago, 5720 South Ellis Avenue, Chicago, Illinois 60637, USA}
\author{Yutaro Nakabayashi}
\affiliation{Graduate School of Mathematical Sciences, The University of Tokyo, 3-8-1 Komaba, Meguro-ku, Tokyo 153-8914, Japan}

\date{\today}

\begin{abstract}
    We study invariant measures for soliton systems described by Mealy automata. Motivated by recently introduced soliton models associated with 2-letter, 3-state Mealy automata, we formulate the time evolution induced by Mealy automata on bi-infinite configuration spaces. We provide sufficient conditions for the invariance of Bernoulli product measures and derive a criterion for the invariance of two-sided space-homogeneous Markov distributions. We then apply these general results to three soliton models, which can be interpreted as variants of the box-ball system (BBS). For two of these models, BBS-S(2) and BBS-V(2), we prove that Bernoulli product measures are invariant. For the remaining model, BBS-C(2), we establish a more general result: the invariance of two-sided space-homogeneous Markov distributions, which include Bernoulli product measures as a special case. Furthermore, for all three models, we compute the phase shift associated with the interaction of two solitons, as well as the velocity of an isolated soliton. Although the latter has already been studied previously, both quantities constitute fundamental characteristics for understanding the generalized hydrodynamics of these systems. These results provide a foundation for the study of invariant measures, generalized Gibbs ensembles, and generalized hydrodynamic behavior in Mealy-automaton soliton systems.
\end{abstract}

\maketitle

\section{Introduction}

The study of invariant measures plays a central role in statistical physics, ergodic theory, and the theory of integrable systems~\cite{novikov1984theory, walters2000introduction, friedli2017statistical}. In recent years, the generalized Gibbs ensemble (GGE) has been actively studied in the context of statistical physics and integrable systems~\cite{spohn2020generalized, kuniba2020generalized}. Integrable systems are characterized by the existence of many, often infinitely many, conserved quantities, in contrast to non-integrable systems. The GGE extends the Gibbs (Boltzmann) distribution by assigning a weight to each conserved quantity. Thus, the GGE involves multiple parameters reflecting the number of conserved quantities. Correspondingly, invariant distributions of integrable systems are generally not uniquely determined; rather, one expects a family of invariant measures with degrees of freedom associated with these conserved quantities. In parallel with the development of the GGE, research on generalized hydrodynamics (GHD), a hydrodynamic equation for integrable systems based on the GGE, has also been advancing~\cite{doyon2019generalized, kuniba2020generalized}.

A natural class of systems to study within the framework of integrable systems and their associated invariant measures consists of bi-infinite binary sequences $\eta=\{\eta_n\}_{n\in \mathbb{Z}}$, which we will analyze throughout this paper. We interpret $\eta_n=1$ as the presence of a particle (or a ball) at site $n$, and $\eta_n=0$ as its absence. Each $n\in\mathbb{Z}$ is referred to as a site on a one-dimensional lattice. Hereafter, we use the terms ``particle'' and ``ball'' interchangeably.

As a pedagogical example to illustrate the dynamics of binary sequences, we consider the box-ball system (BBS). The BBS, introduced by Takahashi and Satsuma~\cite{19903514}, is a classic model of discrete integrable systems. We describe the dynamics of the BBS using the carrier formulation~\cite{DaisukeTakahashi_1997}. Consider a one-dimensional array of boxes (or sites) indexed by $\mathbb{Z}$. Balls (or particles) occupy only finitely many sites, and all other sites are empty. Since only a finite number of sites are occupied, we can choose $N\in \mathbb{Z}$ such that $\eta_n = 0$ for all $n<N$. We begin with an empty carrier at site $N$ and move it along the sites $N, N+1, N+2, \ldots$. When the carrier passes a site that contains a ball, it picks up the ball. When, in turn, the carrier reaches an empty site and is carrying at least one ball, it drops one ball into the site. This operation, performed sequentially across the relevant sites, produces a new configuration $\tilde{\eta}$ from the original configuration $\eta$. We regard $\tilde{\eta}$ as the configuration resulting from $\eta$ after one unit of time and refer to this map as the time evolution. In~\cite{croydon2023dynamics, ferrari2021soliton}, the definition of the BBS is naturally extended, under suitable conditions, to the case where infinitely many sites are occupied. This extension is natural from the viewpoint of invariant measures, since in a finite-particle system, all particles drift to the right, and hence no nontrivial invariant measure is expected.

The dynamics of the BBS exhibits solitons: stable clusters of balls and boxes that retain their shape under time evolution, as illustrated by the blue and green blocks in Fig.~\ref{fig:time_evolution_bbs}. These solitons appear as blocks of the form $1^k0^k$ or $0^k1^k$, meaning $k$ consecutive $1$'s followed by $k$ consecutive $0$'s, or vice versa. We refer to a soliton of size $k$ as a $k$-soliton. A free $k$-soliton has velocity $k$, which we call its bare velocity. This is the velocity of the soliton in the absence of interactions with other solitons. When a faster $k$-soliton overtakes a slower $l$-soliton with $k>l$, the solitons experience phase shifts, namely displacements relative to the non-interacting evolution: the $k$-soliton is shifted to the right by $2l$, while the $l$-soliton is shifted to the left by $2l$, relative to the non-interacting evolution. Solitons can be precisely identified using the Takahashi-Satsuma algorithm~\cite{19903514}. In the BBS, the numbers of solitons of each size are conserved quantities. This conservation property leads to a decomposition of any configuration into components corresponding to solitons of each size. In~\cite{ferrari2021soliton}, it is shown that one can construct a large family of invariant measures by prescribing distributions for these components under appropriate conditions. Regarding hydrodynamic behavior, in~\cite{Croydon2021}, the GHD equation for the BBS on the half-line is derived under suitable assumptions via an Euler-scaling limit.
The BBS serves as a prototypical example within the Mealy automaton~\cite{Mealy1955} framework, in which the output symbol (e.g., a letter) and the next state are determined by the current state and the input symbol (see Example~\ref{ex:TakahashiSatsuma}).

\begin{figure}[htbp]
    \centering
    \[
    \text{BBS}\ 
    \begin{array}{lll}
        \text{t=0:} & \text{configuration} & \texttt{0000{\bl\du{11110000}}0{\gr\su{1100}}000000000000000000} \\
        & \rd{\text{carrier}} & \rd{\texttt{00001234321001210000000000000000000}} \\
        \text{t=1:} & \text{configuration} & \texttt{00000000{\bl\du{1111000}{\gr\su{1100}}\du{0}}000000000000000} \\
        & \rd{\text{carrier}} & \rd{\texttt{00000000123432123210000000000000000}} \\
        \text{t=2:} & \text{configuration} & \texttt{000000000000{\bl\du{111}{\gr\su{0011}}\du{10000}}00000000000} \\
        & \rd{\text{carrier}} & \rd{\texttt{00000000000012321234321000000000000}} \\
        \text{t=3:} & \text{configuration} & \texttt{000000000000000{\gr\su{1100}}0{\bl\du{11110000}}0000000} \\
    \end{array}
    \]
    \caption{Time evolution of the box--ball system (BBS) with a carrier. The row labeled ``carrier'' records the number of balls carried after passing each site. The single-underlined (green) and double-underlined (blue) blocks indicate solitons of sizes $2$ and $4$, respectively. At $t=3$, the interaction between the solitons is complete and they are fully separated}
    \label{fig:time_evolution_bbs}
\end{figure}

In~\cite{maeno2025solitons}, three new models with soliton solutions described by Mealy automata were introduced. These models may be regarded as variants of the BBS in which the carrier rule is slightly modified. More precisely, they are described by 2-letter, 3-state Mealy automata that are bijective, transitive, particle-preserving, and locally interacting. While a large family of invariant measures for the BBS has been constructed based on the soliton decomposition~\cite{ferrari2021soliton}, the structure of invariant measures for soliton systems described by Mealy automata remains largely unexplored. The aim of this paper is to develop a general framework for studying invariant measures for such systems and to apply it to the three models from~\cite{maeno2025solitons} as concrete examples. To this end, we first formulate the dynamics induced by a Mealy automaton on bi-infinite configuration spaces and provide sufficient conditions under which Bernoulli product measures or two-sided Markov distributions are invariant. We then use these criteria to compute explicit classes of invariant Bernoulli and Markov measures for the three models. In addition, we analyze the corresponding soliton dynamics by computing soliton velocities and phase shifts for these models.

A natural direction for future work is to clarify the relation between the invariant measures obtained here and generalized Gibbs ensembles, as well as to investigate the generalized hydrodynamic behavior of these models. We expect that the three models will offer a convenient framework for testing ideas related to GGE and GHD, and may also provide insight into broader classes of integrable systems described by automata.

The rest of this paper is organized as follows. In Section~\ref{sec:automata}, we formulate the time evolution induced by a Mealy automaton on a suitable configuration space, and provide sufficient conditions under which Bernoulli product measures (Theorem~\ref{thm:Ber_inv_meas}) or two-sided Markov distributions (Theorem~\ref{thm:inv_meas}) are invariant under the automaton dynamics. In Section~\ref{sec:models}, we describe the three models BBS-C(2), BBS-S(2), and BBS-V(2) in detail and specify the relevant configuration spaces; we also verify that the dynamics preserve these spaces (Proposition~\ref{prop:preserveOmega}). In Section~\ref{sec:solitons}, we discuss solitons in these models, including their velocities and the phase shifts resulting from soliton interactions. In Section~\ref{sec:measures}, we apply the general results from Section~\ref{sec:automata} to the three models and prove the following statements about invariant measures: Bernoulli product measures are invariant for BBS-S(2) and BBS-V(2) (Theorem~\ref{inv_meas_S2V2}), while a two-sided space-homogeneous Markov distribution (which includes Bernoulli product measures as a special case) is invariant for BBS-C(2) (Theorem~\ref{thm:inv_meas_C2}). Finally, the appendices contain the proof of a technical proposition concerning the Markov distribution (Appendix~\ref{app:Mar_dist_supp}) and the proof of Remark~\ref{rem:Omega_ext}, which explains the
choice of the configuration space (Appendix~\ref{app:Omega_ext}).

\section{Time Evolution and Invariant Measures in Mealy Automata}
\label{sec:automata}

In this section, we introduce the formalism of Mealy automata, which governs the time evolution of configurations. We provide precise definitions and describe the operational rules by which Mealy automata generate output sequences. Furthermore, we establish general conditions under which Bernoulli product measures or two-sided Markov distributions are invariant with respect to the dynamics. These results serve as a framework for the analysis of the specific models in the subsequent sections.

\subsection{Mealy Automata and Their Properties}

Let $Q$ be a non-empty countable set of states and $S$ be a non-empty set of letters, called an alphabet. In this paper, we focus on the two-letter case $S = \{0,1\}$, meaning that a site is either vacant (represented by $0$) or occupied (represented by $1$). Our definition of a Mealy automaton follows~\cite{maeno2025solitons}. A Mealy automaton is characterized by two functions: a transition function $\varphi$ and an output function $\psi$. The transition function $\varphi$ maps a state and an input letter to the next state:
\begin{equation}
    \varphi\colon Q\times S \to Q.
\end{equation}
The output function $\psi$ maps a state and an input letter to an output letter:
\begin{equation}
    \psi\colon Q \times S \to S.
\end{equation}
The automaton is formally defined by the quadruple $(Q, S, \varphi, \psi)$. The actions of $\varphi$ and $\psi$ are illustrated in Fig.~\ref{fig:action}.

\begin{figure}[htbp]
\centering
\begin{tikzpicture}{auto}
    \node (s) at (0,0) {$s$};
    \node (q) at (-1,1) {$q$};
    \node (stilde) at (0,2) {$\tilde{s}$};
    \node (tildeq) at (1,1) {$\tilde{q}$};

    \draw[->] (s) to node[xshift = -5pt, yshift = 10pt] {$\scriptstyle \psi$} (stilde);
    \draw[->] (q) to node[xshift = 10pt, yshift = 5pt] {$\scriptstyle \varphi$} (tildeq);
\end{tikzpicture}
\caption{Action of $(\varphi, \psi)$}
\label{fig:action}
\end{figure}

\noindent
Here, $q, \tilde{q}\in Q$, $s,\tilde{s}\in S$, and 
\begin{equation}
    \left\{
    \begin{aligned}
        \varphi(q,s) &= \tilde{q} \\
        \psi(q, s)   &= \tilde{s}.
    \end{aligned}
    \right.
\end{equation}
The action of the automaton on a sequence of input letters is defined recursively. That is, given an initial state $Q_0\in Q$ and an input sequence of letters $s_1,s_2,\ldots, s_n \in S$, the sequence of states $Q_1,Q_2,\ldots, Q_n$ and the output sequence of letters $\tilde{s}_1, \tilde{s}_2,\ldots, \tilde{s}_n$ are generated by the relations
\begin{equation}
    \left\{
    \begin{aligned}
    Q_{k} &= \varphi(Q_{k-1},s_k) \\
    \tilde{s}_k &= \psi(Q_{k-1}, s_k)
    \end{aligned}
    \right.
\end{equation}
for $k= 1,2,\ldots, n$. The action on an input sequence is illustrated in Fig.~\ref{fig:action_on_seq}.

\begin{figure}[htbp]
\centering
    \begin{tikzpicture}
        \node (s1) at (0,0) {$s_1$};
        \node (s2) at (2,0) {$s_2$};
        \node (Q0) at (-1,1) {$Q_0$};
        \node (Q1) at (1,1) {$Q_1$};
        \node (stilde1) at (0,2) {$\tilde{s}_1$};
        \node (stilde2) at (2,2) {$\tilde{s}_2$};
        \node (Q2) at (3,1) {$Q_2$};
        \node (Qk-1) at (4.3,1) {$Q_{k-1}$};
        \node (sk) at (5.3,0) {$s_k$};
        \node (Qk) at (6.3,1) {$Q_k$};
        \node (stildek) at (5.3,2) {$\tilde{s}_k$};
        \node (Qn-1) at (7.6,1) {$Q_{n-1}$};
        \node (sn) at (8.6,0) {$s_n$};
        \node (Qn) at (9.6,1) {$Q_n$};
        \node (stilden) at (8.6,2) {$\tilde{s}_n$};

        \draw[->] (s1) to node[xshift = -5pt, yshift = 10pt] {$\scriptstyle \psi$} (stilde1);
        \draw[->] (s2) to node[xshift = -5pt, yshift = 10pt] {$\scriptstyle \psi$} (stilde2);
        \draw[->] (sk) to node[xshift = -5pt, yshift = 10pt] {$\scriptstyle \psi$} (stildek);
        \draw[->] (Q0) to node[xshift = 10pt, yshift = 5pt] {$\scriptstyle \varphi$} (Q1);
        \draw[->] (Q1) to node[xshift = 10pt, yshift = 5pt] {$\scriptstyle \varphi$} (Q2);
        \draw[->] (Qk-1) to node[xshift = 10pt, yshift = 5pt] {$\scriptstyle 
        \varphi$} (Qk);
        \draw[->] (Qn-1) to node[xshift = 10pt, yshift = 5pt] {$\scriptstyle \varphi$} (Qn);
        \draw[->] (sn) to node[xshift = -5pt, yshift = 10pt] {$\scriptstyle \psi$} (stilden);
        \draw[loosely dotted] (Q2) -- (Qk-1);
        \draw[loosely dotted] (Qk) -- (Qn-1);
    \end{tikzpicture}
\centering
\caption{Action of $(\varphi, \psi)$ on a sequence}
\label{fig:action_on_seq}
\end{figure}

Here we introduce several concepts concerning automata. We follow the definitions in~\cite{maeno2025solitons}. The particle-preserving property corresponds to the conservation of the number of particles in our framework, while bijectivity implies that the dynamics admit an inverse operation.

\begin{definition}
    An automaton $\mathscr{A}=(Q, S, \varphi, \psi)$ is said to be particle-preserving if there exists a weight function $w\colon Q\cup S\to \mathbb{R}$ such that $w$ is not constant on $S$, and the following conservation law holds for all $(q,s)\in Q \times S$:
    \begin{equation}
        w(q)+w(s) = w(\varphi(q,s)) + w(\psi(q,s)).
        \label{eq:conserv}
    \end{equation}
    \label{def:weight}
\end{definition}

\begin{remark}\label{rem:particle_preserving}
    When $S=\{0,1\}$, the weight function $w$ may, without loss of generality, be chosen so that $w(s)=s$ on $S$. In fact, by the definition of particle preservation, $w$ is not constant on $S$, that is, $w(0)\neq w(1)$. Any such function can be written as an affine function $w(s) = as + b$ for some constants $a,b\in \mathbb{R}$ with $a\neq 0$. Substituting this into the conservation law~\eqref{eq:conserv} gives
    \begin{equation}
        w(q) + as + b=w(\varphi(q,s))+a\psi(q,s) + b.
    \end{equation}
    By redefining the weight function as
    \begin{equation}
    \left\{
        \begin{aligned}
            w'(q) &=\frac{w(q)}{a}\\
            w'(s) &= s,
        \end{aligned}
    \right.
    \end{equation}
    we obtain
    \begin{equation}
        w'(q)+w'(s)=w'(\varphi(q,s)) + w'(\psi(q,s)).
    \end{equation}
    This is precisely the conservation law for the case where the weight function is chosen as $w(s) = s$ on $\{0,1\}$. Moreover, there is no loss of generality in assuming $w(\tilde{q})=0$ for an arbitrarily chosen $\tilde{q}$ by redefining the weight function as
    \begin{equation}
    \left\{
    \begin{aligned}
        w'(q) &= w(q) - w(\tilde{q}) \\
        w'(s) &= w(s).
    \end{aligned}
    \right.
    \end{equation}
\end{remark}

\begin{definition}
    An automaton $\mathscr{A} = (Q,S,\varphi, \psi)$ is said to be bijective if the mapping
    \begin{equation}
        (q,s)\longmapsto (\varphi(q,s), \psi(q,s))
    \end{equation}
    is a bijection on $Q\times S$. When $\mathscr{A}$ is bijective, we denote its inverse by $(q,s) \mapsto (\tilde{\varphi}(q,s),\tilde{\psi}(q,s))$.
\end{definition}

\subsection{Configuration Space and Well-Defined Action}\label{sec:config}

Let $\mathscr{A} = (Q,\{0,1\}, \varphi, \psi)$ be an automaton. From this point onward, we write a finite word $(s_1,s_2,\ldots,s_n)\in \{0,1\}^n$ as $s_1s_2\ldots s_n$. We also use the shorthand $0^n$ (resp. $1^n$) for a block of $n$ consecutive zeros (resp. ones). For example, $111000011=1^30^41^2$. For a finite word $\mathbf{s} = s_1s_2\cdots s_n \in \{0,1\}^n$ and an initial state $q\in Q$, we define $\varphi(q;\mathbf{s})\in Q$ and $\psi(q;\mathbf{s})\in \{0,1\}^n$ as follows: $\varphi(q;\mathbf{s})$ is the final state obtained by scanning $\mathbf{s}$ from left to right starting from the state $q$, while $\psi(q;\mathbf{s})$ denotes the corresponding output sequence. For example, in Fig.~\ref{fig:action_on_seq}, we have
\begin{equation}
    \left\{
    \begin{aligned}
    \varphi(Q_0; s_1s_2\cdots s_n) &= Q_n \\
    \psi(Q_0;s_1s_2\cdots s_n) &= \tilde{s}_1\tilde{s}_2\cdots \tilde{s}_n.
    \end{aligned}
    \right.
\end{equation}
When $\mathscr{A}$ is bijective, for $\tilde{\mathbf{s}}\in \{0,1\}^n$ and $q\in Q$, we denote by $\tilde{\varphi}(\tilde{\mathbf{s}};q)\in Q$ and $\tilde{\psi}(\tilde{\mathbf{s}};q)\in \{0,1\}^n$ the final state and output sequence obtained by scanning $\tilde{\mathbf{s}}$ from right to left starting from state $q$ with the inverse functions $(\tilde{\varphi}, \tilde{\psi})$. Figure~\ref{fig:inverse} illustrates the relations $\tilde{\varphi}(\tilde{s}_1 \tilde{s}_2\cdots\tilde{s}_n;Q_n) = Q_0$ and $\tilde{\psi}(\tilde{s}_1\tilde{s}_2\cdots\tilde{s}_n;Q_n) = s_1s_2\cdots s_n$. In particular, for any $q\in Q$ and $\mathbf{s}\in\{0,1\}^n$,
\begin{equation}
    (\tilde{\varphi}(\psi(q; \mathbf{s}); \varphi(q; \mathbf{s})), \tilde{\psi}(\psi(q; \mathbf{s}); \varphi(q; \mathbf{s})))=(q, \mathbf{s}).
\end{equation}

\begin{figure}[htbp]
\centering
    \begin{tikzpicture}
        \node (s1) at (0,0) {$s_1$};
        \node (s2) at (2,0) {$s_2$};
        \node (Q0) at (-1,1) {$Q_0$};
        \node (Q1) at (1,1) {$Q_1$};
        \node (stilde1) at (0,2) {$\tilde{s}_1$};
        \node (stilde2) at (2,2) {$\tilde{s}_2$};
        \node (Q2) at (3,1) {$Q_2$};
        \node (Qk-1) at (4.3,1) {$Q_{k-1}$};
        \node (sk) at (5.3,0) {$s_k$};
        \node (Qk) at (6.3,1) {$Q_k$};
        \node (stildek) at (5.3,2) {$\tilde{s}_k$};
        \node (Qn-1) at (7.6,1) {$Q_{n-1}$};
        \node (sn) at (8.6,0) {$s_n$};
        \node (Qn) at (9.6,1) {$Q_n$};
        \node (stilden) at (8.6,2) {$\tilde{s}_n$};
        
        \draw[<-] (s1) to node[xshift = -5pt, yshift = 10pt] {$\scriptstyle \tilde{\psi}$} (stilde1);
        \draw[<-] (s2) to node[xshift = -5pt, yshift = 10pt] {$\scriptstyle \tilde{\psi}$} (stilde2);
        \draw[<-] (sk) to node[xshift = -5pt, yshift = 10pt] {$\scriptstyle \tilde{\psi}$} (stildek);
        \draw[<-] (Q0) to node[xshift = 10pt, yshift = 5pt] {$\scriptstyle \tilde{\varphi}$} (Q1);
        \draw[<-] (Q1) to node[xshift = 10pt, yshift = 5pt] {$\scriptstyle \tilde{\varphi}$} (Q2);
        \draw[<-] (Qk-1) to node[xshift = 10pt, yshift = 5pt] {$\scriptstyle \tilde{\varphi}$} (Qk);
        \draw[<-] (Qn-1) to node[xshift = 10pt, yshift = 5pt] {$\scriptstyle \tilde{\varphi}$} (Qn);
        \draw[<-] (sn) to node[xshift = -5pt, yshift = 10pt] {$\scriptstyle \tilde{\psi}$} (stilden);
        
        \draw[loosely dotted] (Q2) -- (Qk-1);
        \draw[loosely dotted] (Qk) -- (Qn-1);
    \end{tikzpicture}
\caption{Operation of $\tilde{\varphi}$ and $\tilde{\psi}$}
\label{fig:inverse}
\end{figure}

\begin{definition}
    We define a subset $\Omega_{\mathscr{A}}\subset \{0,1\}^\mathbb{Z}$ by
    \begin{equation}
    \begin{aligned}
    \Omega_{\mathscr{A}}
    =\left\{
    \eta \in \{0,1\}^{\mathbb{Z}}
    \left| \
    \begin{aligned}
    &\forall n \in \mathbb{Z}, \ \forall q \in Q, \
    Q_n := \lim_{k \to \infty}
    \varphi\bigl(q;\eta_{n-k}\eta_{n-k+1}\cdots\eta_n\bigr)\\
    &\text{exists} \
    \text{and is independent of } q.
    \end{aligned}
    \right.
    \right\}.
    \end{aligned}
    \end{equation}
    Here, the existence of the limit means that there exists $k_0 \in \mathbb{Z}_{\ge 0}$ such that
    \begin{equation}
    \varphi\bigl(q;\eta_{n-k}\eta_{n-k+1}\cdots\eta_n\bigr)
    = Q_n
    \end{equation}
    holds for all $k \ge k_0$. We define a map
    $T_{\mathscr{A}} \colon \Omega_{\mathscr{A}} \to \{0,1\}^{\mathbb{Z}}$
    by
    \begin{equation}
    \bigl(T_{\mathscr{A}}(\eta)\bigr)_n
    = \psi(Q_{n-1}, \eta_n).
    \end{equation}
    \end{definition}
    The set $\Omega_{\mathscr{A}}$ consists of configurations for which the action of the automaton $\mathscr{A}$ is well-defined. We interpret $T_\mathscr{A}$ as the one-step time evolution: $\eta$ represents the initial configuration at $t=0$. If $T_\mathscr{A}(\Omega_\mathscr{A})\subset \Omega_\mathscr{A}$, then the iterates $T_\mathscr{A}^t(\eta)$ are well-defined for $t\in \mathbb{Z}_{\geq 2}$, and $T_\mathscr{A}^t(\eta)$ represents the configuration at time $t$. If $\mathscr{A}$ is bijective, then the inverse operation gives a one-step backward evolution on the set of configurations for which the action of the inverse automaton is well-defined. 

\begin{example}\label{ex:TakahashiSatsuma}
The box-ball system~\cite{19903514} can be represented as an automaton with $Q = \mathbb{Z}_{\geq 0}$, $S=\{0,1\}$, and
\begin{equation}
\left\{
    \begin{aligned}
        \varphi(q,s) &=
        \begin{cases}
            q+1 & \text{if}\ s = 1\\
            q-1 & \text{if}\ q > 0, s = 0\\
            0   & \text{if}\ q = 0, s=0
        \end{cases}\\
        \psi(q,s) &=
        \begin{cases}
            0 & \text{if}\ s = 1\\
            0 & \text{if}\ q=0, s=0 \\
            1 & \text{if}\ q>0, s=0.
        \end{cases}
    \end{aligned}
\right.
\end{equation}
That is, when scanning from left to right, the carrier picks up a ball at each occupied site $(\text{corresponding to} \ 1\mapsto 0 \ \text{in} \ S)$. If the carrier is nonempty, it deposits a ball at the next vacant site $(0\mapsto 1)$, otherwise the site remains vacant $(0\mapsto 0)$. For configurations with finitely many balls, the box-ball dynamics are obtained by placing an empty carrier immediately to the left of the leftmost ball and scanning to the right according to the carrier rule. 

In~\cite{croydon2023dynamics}, the dynamics are extended to configurations with infinitely many balls. Define
\begin{equation}
    \Omega_{\textup{sub-critical}}^- = \left\{\eta\in\{0,1\}^\mathbb{Z} \left|\ \lim_{n\to \infty} \left(\sum_{m=-n+1}^0\eta_m - \frac{n}{2}\right)\to -\infty\right.\right\}.
\end{equation}
Then $\Omega_{\textup{sub-critical}}^- \subset \Omega_{\mathscr{A}}$. Left sub-criticality of the path ensures that the left boundary limit of the carrier exists and is independent of the initial state, so the time evolution determined by $\mathscr{A}$ is well-defined.
\end{example}

\subsection{Bernoulli Distributions, Markov Distributions, and Their Invariance under Automaton Dynamics}

In this section, we investigate the invariance of Bernoulli and Markov distributions on $\{0,1\}^\mathbb{Z}$ under the time evolution $T_\mathscr{A}$ induced by the automaton $\mathscr{A} = (Q, \{0,1\}, \varphi, \psi)$. We consider a Markov distribution $\nu_M$ on $\{0,1\}^\mathbb{Z}$ characterized by a transition probability matrix $P$. The transition probability matrix $P$ is parameterized by $p, r \in (0, 1)$ as follows:
\begin{equation}
P =
\begin{pmatrix}
p & 1-p\\
1-r & r
\end{pmatrix}.
\end{equation}
Here, $p$ denotes the transition probability from $0$ to $0$, while $r$ is the transition probability from $1$ to $1$. When $r = 1-p$, the Markov distribution $\nu_M$ reduces to the product Bernoulli distribution $\nu_p = \operatorname{Be}(p)^{\otimes\mathbb{Z}}$. Since it suffices to consider the dynamics on a subset on which $\nu_M$ is concentrated, we restrict our attention to the following tractable configuration space $\Omega$ defined as
\begin{align}
\label{def:Omega}
\Omega 
&= \Omega_+ \cap \Omega_-
\end{align}
with
\begin{align}
\Omega_+ 
&= \bigcap_{N = 1}^{\infty} \bigcap_{M = 1}^{\infty}
\bigcup_{m = M}^{\infty} A_{m, N}
= \bigcap_{N = 1}^{\infty}
\limsup_{m \to \infty} A_{m, N}\\
\Omega_- 
&= \bigcap_{N = 1}^{\infty} \bigcap_{M = 1}^{\infty}
\bigcup_{m = M}^{\infty} B_{m, N}
= \bigcap_{N = 1}^{\infty}
\limsup_{m \to \infty} B_{m, N} ,
\end{align}
where
\begin{align}
A_{m, N} 
&= \{\eta \in \{0,1\}^{\mathbb{Z}} \mid
\eta_{m} = \ldots = \eta_{m + N - 1} = 0\}\\
B_{m, N} 
&= \{\eta \in \{0,1\}^{\mathbb{Z}} \mid
\eta_{-m} = \ldots = \eta_{-m - N + 1} = 0\}.
\end{align}
Here, $A_{m, N}$ denotes the set of configurations containing a block of $N$ consecutive empty sites ($0$'s) starting at position $m$ to the right. Similarly, $B_{m, N}$ denotes the set of configurations containing a block of $N$ consecutive empty sites ($0$'s) starting at position $-m$ to the left. Consequently, $\Omega_+$ is the set of configurations where arbitrarily long blocks of zeros appear infinitely often in the sequence $(\eta_n)_{n \in \mathbb{Z}_{\geq 0}}$, while $\Omega_-$ consists of configurations for which arbitrarily long blocks of zeros appear infinitely often in the sequence $(\eta_n)_{n \in \mathbb{Z}_{\leq 0}}$. Since $\Omega$ is their intersection, it is the set of configurations where arbitrarily long blocks of consecutive empty sites appear infinitely often in both the positive and negative directions.
\begin{proposition}\label{prop:Mar_dist_supp}
The Markov distribution $\nu_M$ is supported on $\Omega$. That is,
\begin{equation}
\nu_M(\Omega) = 1.
\end{equation}
\end{proposition}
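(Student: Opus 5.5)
Since $\Omega=\Omega_+\cap\Omega_-$ is a countable intersection of events, it suffices to show $\nu_M(\limsup_{m\to\infty} A_{m,N})=1$ and $\nu_M(\limsup_{m\to\infty} B_{m,N})=1$ for every fixed $N\ge 1$; countable subadditivity applied to the complements then gives $\nu_M(\Omega)=1$. We take $\nu_M$ to be the stationary two-sided chain with transition matrix $P$. Its stationary law is $\pi=(\pi_0,\pi_1)\propto(1-r,1-p)$, so $\pi_0>0$ because $p,r\in(0,1)$. By the symmetry of the argument (time reversal of a stationary two-state chain is again a stationary Markov chain, in fact reversible), the statement for $B_{m,N}$ follows in the same way as for $A_{m,N}$. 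So we concentrate on $A_{m,N}$.

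Fix $N$. Bernoulli events would let us use Borel--Cantelli directly, but here the $A_{m,N}$ are dependent. To handle this, we consider the disjoint blocks starting at $m_j = jN$ for $j\ge 1$ and the events $A_{jN,N}$. The key step is a conditional lower bound. Given the whole past $\mathcal F_{jN-1}=\sigma(\eta_n : n\le jN-1)$, the Markov property gives
\begin{equation}
\nu_M\bigl(A_{jN,N}\mid \mathcal F_{jN-1}\bigr)=P(\eta_{jN-1},0)\,p^{N-1}\ge \delta:=\min(p,1-r)\,p^{N-1}>0 .
\end{equation}
It follows that the probability that none of $A_{jN,N},\dots,A_{(j+K)N,N}$ occurs is at most $(1-\delta)^{K+1}$. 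We obtain this by conditioning successively and iterating the bound from the last block backward. Letting $K\to\infty$, with probability one some $A_{m,N}$ with $m\ge jN$ occurs for each $j$. Intersecting over $j$ yields $\nu_M(\limsup_m A_{m,N})=1$. Equivalently, one could invoke Lévy's extension of Borel--Cantelli, since $\sum_j \nu_M(A_{jN,N}\mid\mathcal F_{jN-1})\ge\sum_j\delta=\infty$.

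For $B_{m,N}$ we can either use reversal, as noted above, or argue directly leftward. The blocks $\{-m-N+1,\dots,-m\}$ with $m=jN$ are handled by conditioning on the future $\sigma(\eta_n:n\ge -m+1)$. For the stationary chain the reversed process is Markov with transition $\hat P(a,b)=\pi_b P(b,a)/\pi_a$, whose entries are all positive. The same uniform lower bound $\hat\delta>0$ then holds, and the same geometric estimate applies.

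The main obstacle is only bookkeeping: making the conditional iteration rigorous. This means checking that the bound $\delta$ is uniform in the conditioning state, which holds because all entries of $P$ are positive, and handling the reversed chain on the negative side. Neither step seems genuinely difficult.
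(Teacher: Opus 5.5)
Your proposal is correct and follows essentially the same route as the paper's proof: disjoint length-$N$ blocks starting at multiples of $N$, a uniform lower bound $\delta=\min(p,1-r)\,p^{N-1}$ (exactly the paper's $\delta$) on the conditional probability of a zero block given the preceding site, the geometric bound $(1-\delta)^K\to 0$, and reversibility of the stationary two-state chain to pass from $\Omega_+$ to $\Omega_-$. The differences are cosmetic. The paper uses stationarity to reduce to blocks starting at $0$, whereas you treat each starting index $jN$ directly. Your conditioning on the full past $\mathcal{F}_{jN-1}$ also spells out the Markov-property step that the paper leaves implicit.
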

\begin{proof}
See Appendix~\ref{app:Mar_dist_supp}.
\end{proof}

We first give a sufficient condition for a probability measure on $\{0,1\}^\mathbb{Z}$ to be a Markov distribution. This criterion will later be used to prove invariance of the Markov distribution under the automaton dynamics.
\begin{lemma}\label{lem:000}
    Let $\mu$ be a shift-invariant probability measure on $\{0,1\}^{\mathbb{Z}}$ with $\mu(\Omega)=1$.
    Assume that there exists an integer $m\geq 1$ such that, for each $j\in \{0,1\}$, for all $l \geq m+2$, and for any sequence $\mathbf{r} \in \{0,1\}^{l-m-1}$, the following equation holds:
    \begin{align}
        \mu(\eta_{l+1}=j \mid \eta_1\cdots\eta_l = 0^m\mathbf{r}j)=  \mu(\eta_{l+1}=j\mid \eta_l=j).
        \label{eq:mu_assum}
    \end{align}
    Then $\mu$ is a Markov distribution. That is, for any integer $n\geq 2$, for each $j\in\{0,1\}$, and for any sequence $\mathbf{s}\in \{0,1\}^{n-1}$, the equation
    \begin{equation}
        \mu(\eta_{n+1}=j\mid \eta_1\cdots\eta_n=\mathbf{s}j)
        =\mu(\eta_{n+1}=j\mid\eta_n=j)
        \label{eq:mu_goal}
    \end{equation}
    also holds.
\end{lemma}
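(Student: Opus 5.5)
Fix $n\ge 2$, $j\in\{0,1\}$ and $\mathbf{s}\in\{0,1\}^{n-1}$ with $\mu(\eta_1\cdots\eta_n=\mathbf{s}j)>0$; otherwise there is nothing to prove. The plan is to decompose the event $\{\eta_1\cdots\eta_n=\mathbf{s}j\}$ according to the position of a block $0^m$ lying to its left. Every conditional probability in the hypothesis~\eqref{eq:mu_assum} then equals the same constant $c_j:=\mu(\eta_{l+1}=j\mid\eta_l=j)$, and averaging a constant returns that constant.

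By shift invariance it suffices to work with the window $\eta_{k+1}\cdots\eta_{k+n}$ for any $k$; it is more convenient to shift the conditioning window to the right, to positions $K+1,\dots,K+n$ with $K$ large. For $K\ge m+1$, define the disjoint events
\[
E_{K,i}=\{\eta_{i}\cdots\eta_{i+m-1}=0^m,\ \eta_{i'}\cdots\eta_{i'+m-1}\neq 0^m \text{ for all } i<i'\le K+1-m\},\qquad 1\le i\le K+1-m .
\]
Thus $E_{K,i}$ says that $i$ is the last starting point of a $0^m$ block contained in $[1,K]$. The intersection of each $E_{K,i}$ with $\{\eta_{K+1}\cdots\eta_{K+n}=\mathbf{s}j\}$ is a finite disjoint union of cylinders of the form $\{\eta_i\cdots\eta_{K+n}=0^m\mathbf{r}j\}$. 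After shifting by $i-1$, each of these is a cylinder $\{\eta_1\cdots\eta_l=0^m\mathbf{r}j\}$ with $l=K+n-i+1\ge m+n\ge m+2$, which is exactly the admissible range in~\eqref{eq:mu_assum}. The condition $l\ge m+2$ is the reason we need $n\ge 2$; I would check this bookkeeping, including the case where the $0^m$ block overlaps the window, separately. By~\eqref{eq:mu_assum} and shift invariance, each such cylinder $C$ satisfies $\mu(\eta_{K+n+1}=j, C)=c_j\,\mu(C)$. Summing over all the cylinders gives
\[
\mu\bigl(\eta_{K+n+1}=j,\ \eta_{K+1}\cdots\eta_{K+n}=\mathbf{s}j,\ G_K\bigr)=c_j\,\mu\bigl(\eta_{K+1}\cdots\eta_{K+n}=\mathbf{s}j,\ G_K\bigr),
\]
where $G_K=\bigcup_i E_{K,i}$ is the event that some $0^m$ block lies in $[1,K]$.

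It remains to remove $G_K$. Because $\mu(\Omega)=1$, the event $\{\exists\, i\le K-m+1 : \eta_i\cdots\eta_{i+m-1}=0^m\}$ (allowing $i\le 0$ as well) has probability tending to $1$ as $K\to\infty$, since $\Omega_-$ guarantees long zero blocks arbitrarily far to the left. I therefore redefine $G_K$ with starting points ranging over $-K\le i\le K-m+1$ and keep the "last occurrence" decomposition, which works equally well. Then $\mu(G_K^c)\to0$. By shift invariance, the left-hand quantities without $G_K$ do not depend on $K$, so letting $K\to\infty$ yields
\[
\mu(\eta_{n+1}=j,\ \eta_1\cdots\eta_n=\mathbf{s}j)=c_j\,\mu(\eta_1\cdots\eta_n=\mathbf{s}j),
\]
which is~\eqref{eq:mu_goal}.

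The main obstacle is the limiting step. Shift invariance keeps the target probabilities fixed while the window is moved relative to a fixed anchor, and $\mu(\Omega)=1$, specifically $\mu(\Omega_-)=1$, guarantees that a zero block eventually appears to the left with probability close to $1$. Once that is set up, the decomposition into last occurrences is routine.
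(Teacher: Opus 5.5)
Your proof is correct and follows the paper's argument. The paper likewise partitions $\{\eta_1\cdots\eta_n=\mathbf{s}j\}$ according to the last $0^m$ block lying entirely to the left of the window, ending at some $k\le 0$. It applies \eqref{eq:mu_assum} with $l=n-k+m\ge m+2$ to each resulting cylinder and then averages the constant $\mu(\eta_{n+1}=j\mid\eta_n=j)$.

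The only difference is that the paper uses $\mu(\Omega)=1$ to make this countable partition exhaustive almost surely and sums it directly by $\sigma$-additivity. Your truncation to $G_K$ and the limit $K\to\infty$ are therefore unnecessary, though harmless. Also, since your $0^m$ block is always confined to the left of the window, the overlap case you flag never occurs.
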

\begin{proof}
    By assumption, we may choose an integer $m\geq 1$ such that \eqref{eq:mu_assum} holds for each $j\in\{0,1\}$, for all $l\geq m+2$, and for any sequence $\mathbf{r}\in\{0,1\}^{l-m-1}$. Fix such an $m$ throughout the proof. Let $\mathbf{s}'= \mathbf{s}j=s_1\cdots s_{n-1} j$. We compute the conditional probability appearing on the left-hand side of \eqref{eq:mu_goal},
    \begin{equation}
        \mu(\eta_{n+1}=j\mid \eta_1 \cdots \eta_n=\mathbf{s}').
    \end{equation}
    
    First, for each integer $k \leq 0$, we define $S_k(\mathbf{s}')$ as
    \begin{equation}
        S_k(\mathbf{s}') = \{\tilde{\mathbf{s}} = s_{k-m+1}\cdots s_n\in \{0,1\}^{n-k+m} \mid s_1 \cdots s_n=\mathbf{s}', \max\{l\leq 0 \mid s_{l-m+1}\cdots s_{l}=0^m\} =k\}.
    \end{equation}
    For any $\eta \in \Omega$ satisfying $\eta_1\cdots \eta_n=\mathbf{s}'$, there exists a unique $k\leq 0$ such that $\eta_{k-m+1} \cdots \eta_n\in S_k(\mathbf{s}')$. Therefore,
    \begin{align}
        \mu(\eta_{n+1}=j, \eta_1 \cdots \eta_n= \mathbf{s}')
        &= \sum_{k\leq 0}\sum_{\tilde{\mathbf{s}}\in S_k(\mathbf{s}')} \mu(\eta_{n+1}=j, \eta_{k-m+1} \cdots \eta_n = \tilde{\mathbf{s}}) \\
        &=\mu(\eta_1 \cdots \eta_n=\mathbf{s}')\sum_{k\leq 0}\sum_{\tilde{\mathbf{s}}\in S_k(\mathbf{s}')} \mu(\eta_{n+1}=j\mid \eta_{k-m+1} \cdots \eta_n = \tilde{\mathbf{s}}) \mu(\eta_{k-m+1} \cdots \eta_n = \tilde{\mathbf{s}}\mid \eta_1 \cdots \eta_n=\mathbf{s}').
    \end{align}
    Then, by assumption \eqref{eq:mu_assum},
    \begin{align}
        \mu(\eta_{n+1}=j \mid \eta_1 \cdots \eta_n = \mathbf{s}')
        &=\frac{\mu(\eta_{n+1}=j, \eta_1 \cdots \eta_n= \mathbf{s}')}{\mu(\eta_1 \cdots \eta_n=\mathbf{s}')} \\
        &=\sum_{k\leq 0}\sum_{\tilde{\mathbf{s}}\in S_k(\mathbf{s}')} \mu(\eta_{n+1}=j\mid \eta_{k-m+1} \cdots \eta_n = \tilde{\mathbf{s}}) \mu(\eta_{k-m+1} \cdots \eta_n = \tilde{\mathbf{s}}\mid \eta_1 \cdots \eta_n=\mathbf{s}') \\
        &=\sum_{k\leq 0}\sum_{\tilde{\mathbf{s}}\in S_k(\mathbf{s}')} \mu(\eta_{n+1}=j\mid \eta_n = j) \mu(\eta_{k-m+1}\cdots \eta_n=\tilde{\mathbf{s}}\mid\eta_1 \cdots \eta_n=\mathbf{s}') \\
        &=\mu(\eta_{n+1}=j \mid \eta_n=j).
    \end{align}
    Here we used
    \begin{equation}
        \sum_{k\leq 0}\sum_{\tilde{\mathbf{s}}\in S_k(\mathbf{s}')} \mu(\eta_{k-m+1}\cdots \eta_n=\tilde{\mathbf{s}}\mid\eta_1 \cdots \eta_n=\mathbf{s}') = 1.
    \end{equation}
    This completes the proof.
\end{proof}

\begin{proposition}
    Assume $\nu_M(\Omega_\mathscr{A}) = 1$. Then, under $\nu_M$, the pair $X_n = (Q_{n-1},\eta_{n})$ is a $Q\times \{0,1\}$-valued Markov process. Moreover, the joint distribution of $(Q_{n-1}, \eta_n)$ is stationary for this Markov process.
\end{proposition}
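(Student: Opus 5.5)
The plan is to show that, conditionally on the past $X_{n-1}, X_{n-2}, \ldots$, the law of $X_n=(Q_{n-1},\eta_n)$ depends only on $X_{n-1}=(Q_{n-2},\eta_{n-1})$. The key observation is that $Q_{n-1}=\varphi(Q_{n-2},\eta_{n-1})$ holds deterministically on $\Omega_{\mathscr{A}}$. This follows directly from the definition of $Q_n$ as an eventually constant limit: for $k$ large,
\[
\varphi(q;\eta_{n-1-k}\cdots\eta_{n-1})=\varphi\bigl(\varphi(q;\eta_{n-1-k}\cdots\eta_{n-2}),\eta_{n-1}\bigr),
\]
and the inner state stabilizes to $Q_{n-2}$. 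So $Q_{n-1}$ is a function of $X_{n-1}$. Moreover, each $Q_{n-2}$ is a measurable function of $(\eta_m)_{m\le n-2}$, being an eventual limit of cylinder functions. Hence $\sigma(X_m: m\le n-1)\subset\mathcal{F}_{n-1}:=\sigma(\eta_m: m\le n-1)$, up to the $\nu_M$-null complement of $\Omega_{\mathscr{A}}$, which is discarded using the assumption $\nu_M(\Omega_{\mathscr{A}})=1$.

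Next I would compute the transition kernel. For $(q',s')$ and any event $E$ in $\sigma(X_m:m\le n-1)$ contained in $\{X_{n-1}=(q,s)\}$, we have
\[
\nu_M(X_n=(q',s'),E)=\mathbf{1}\{q'=\varphi(q,s)\}\,\nu_M(\eta_n=s',E).
\]
Since $E\in\mathcal{F}_{n-1}$ and $\eta_{n-1}=s$ on $E$, the Markov property of $\nu_M$ gives $\nu_M(\eta_n=s',E)=P_{s s'}\,\nu_M(E)$. This uses the Markov property with respect to the whole infinite past, not only finite cylinders. I would justify that step by a monotone class argument, approximating $E$ by events in $\sigma(\eta_{n-L},\ldots,\eta_{n-1})$ and applying the finite-dimensional Markov property. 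Thus
\[
\nu_M\bigl(X_n=(q',s')\mid X_{n-1}=(q,s),X_{n-2},\ldots\bigr)=\mathbf{1}\{q'=\varphi(q,s)\}P_{ss'},
\]
which is independent of $n$. So $(X_n)$ is a time-homogeneous Markov chain on $Q\times\{0,1\}$ with this kernel $K$.

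For stationarity, I would use the shift invariance of $\nu_M$. A two-sided space-homogeneous Markov distribution with its stationary marginal is invariant under the shift $\theta$. By the definition of $Q_n$ we have $Q_{n}(\theta\eta)=Q_{n+1}(\eta)$, so $X_n\circ\theta=X_{n+1}$. Therefore the law $\pi$ of $X_n$ does not depend on $n$. Combined with the kernel identity $\pi K=$ law of $X_{n+1}=\pi$, this shows that $\pi$ is a stationary distribution of the chain.

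The main obstacle is the measure-theoretic care in the first two steps. One must check that $Q_{n-1}$ is measurable with respect to the left tail, which requires only countably many conditions because $Q$ is countable. One must also pass from the finite-cylinder Markov property to conditioning on the infinite past $\mathcal{F}_{n-1}$. I would handle the latter by writing $\{Q_{n-2}=q\}$ as the increasing limit, up to a null set, of the events $\{\varphi(q_0;\eta_{n-2-k}\cdots\eta_{n-2})=q\ \text{for all } k\ge k_0\}$, truncated at finite depth, and then applying dominated convergence.
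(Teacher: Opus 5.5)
Your proposal is correct and follows essentially the same route as the paper. Both arguments rest on three points: $Q_{n-1}=\varphi(Q_{n-2},\eta_{n-1})$ is determined by $X_{n-1}$; the $X$-history up to time $n-1$ is $\mathcal{F}_{n-1}$-measurable, so the Markov property of $\eta$ with respect to its full past gives the kernel $\mathbf{1}\{q'=\varphi(q,s)\}P_{ss'}$; and shift invariance gives $X_n\stackrel{d}{=}X_{n+1}$. The main differences are that you condition on the entire past of $X$ rather than on $X_1,\ldots,X_n$, and that you spell out measure-theoretic steps the paper takes for granted (deriving the recursion for $Q$ from the limit definition, and passing from cylinder events to $\mathcal{F}_{n-1}$).
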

\begin{proof}
    Notice that the following holds for any $n\in\mathbb{Z}$:
    \begin{equation}
    \left\{
    \begin{aligned}
        \nu_M(\eta_{n+1} = 0\mid \mathcal{F}_n) &= p\chi_{\{\eta_n = 0\}} + (1-p)\chi_{\{\eta_n = 1\}}\\
        \nu_M(\eta_{n+1} = 1\mid \mathcal{F}_n) &= (1-r)\chi_{\{\eta_n=0\}} +r\chi_{\{\eta_n = 1\}}.  
    \end{aligned}
    \right.
    \end{equation}
    Here, $\chi$ is an indicator function and $\mathcal{F}_n = \sigma(\{\eta_k\}_{k\leq n})$ is the $\sigma$-algebra generated by $\{\eta_k\}_{k\leq n}$. For any $(q_0,\ldots,q_{n})\in Q^{n+1}$ and $s_1\cdots s_{n+1}\in\{0,1\}^{n+1}$, we show that
    \begin{equation}
        \nu_M(X_{n+1} = (q_{n},s_{n+1})\mid X_{i+1} = (q_i,s_{i+1}), i= 0,\ldots,n-1) = \nu_M(X_{n+1} = (q_{n},s_{n+1})\mid X_n=(q_{n-1},s_{n})).
        \label{eq:Markov_proof}
    \end{equation}
    To prove \eqref{eq:Markov_proof}, we distinguish two cases. First, suppose that $q_{n} \neq \varphi(q_{n-1},s_{n})$. Then both sides of \eqref{eq:Markov_proof} are zero, so the equality holds. Next, assume that $q_n=\varphi(q_{n-1},s_n)$. Since $\{\eta_n\}_{n\in \mathbb{Z}}$ is a Markov chain and $Q_{n-1}$ is measurable with respect to $\sigma(\{\eta_k\}_{k\leq n})$,
    \begin{align}
    \nu_M(\eta_{n+1}=s_{n+1}\mid \eta_n=s_n)
    &=\nu_M(\eta_{n+1}=s_{n+1}\mid \eta_n=s_n,\; Q_{n-1}=q_{n-1}) \\
    &=\nu_M\bigl(X_{n+1}=(q_n,s_{n+1}) \,\big|\, X_n=(q_{n-1},s_n)\bigr).\label{eq:eta_and_X}
    \end{align}
    Since
    \begin{equation}
        \eta_{n+1}=s_{n+1}\text{ and } X_n=(q_{n-1},s_n)
    \end{equation}
    implies
    \begin{equation}
        X_{n+1}=(q_n,s_{n+1})\text{ and } 
        X_n = (q_{n-1},s_n),
    \end{equation}
    we obtain
    \begin{align}
        \nu_M(X_{i+1} = (q_i,s_{i+1}), i= 0,\ldots,n)
        &= E_{\nu_M}[\nu_M(\eta_{n+1}=s_{n+1}\mid \mathcal{F}_n) ; \{Q_n=q_n\}\cap\{X_{i+1}=(q_i,s_{i+1}) \ \text{for }i= 0,\ldots, n-1\}]\\
        &=\nu_M(\eta_{n+1}=s_{n+1}\mid \eta_n= s_n) 
        \nu_M(X_{i+1}=(q_i,s_{i+1}) \ \text{for }i= 0,\ldots, n-1).
        \label{eq:nu_decomposition}
    \end{align}
    From \eqref{eq:eta_and_X} and \eqref{eq:nu_decomposition}, we obtain \eqref{eq:Markov_proof}, and thus $\{X_n\}_{n\in\mathbb{Z}}$ is a Markov process. By the shift invariance of $\nu_M$, we have
    \begin{equation}
    X_n\stackrel{d}{=}X_{n+1},
    \end{equation}
    and hence the process is stationary.
\end{proof}

In this paper, we investigate invariant measures of $T_\mathscr{A}$ for automata $\mathscr{A}$ that satisfy the following conditions. The three models introduced in Section~\ref{sec:models} all fall within this framework.
\begin{enumerate}[label = (\textbf{A\arabic*})]
    \item $\mathscr{A}$ is bijective.\label{enum:bijective}
    \item There exists a unique $q^*\in Q$ such that $(\varphi(q^*,0), \psi(q^*, 0)) = (q^*, 0)$.\label{enum:unique_final_state}
    \item There exists $N\in \mathbb{Z}_{\geq 1}$ such that, for any $q\in Q$, $\varphi(q;0^N)=q^*$ with $q^*$ defined in Assumption~\ref{enum:unique_final_state}. If $\mathscr{A}$ is bijective, we additionally assume that, for every $q\in Q$, $\tilde{\varphi}(0^N;q) = q^*$.\label{enum:reset_word}
\end{enumerate}

\begin{proposition}\label{prop:TOmegainOmega}
    If an automaton $\mathscr{A}= (Q, \{0,1\}, \varphi, \psi)$ satisfies~\textup{\ref{enum:unique_final_state}} and \textup{\ref{enum:reset_word}}, $\Omega\subset \Omega_{\mathscr{A}}$ and $T_{\mathscr{A}}(\Omega)\subset\Omega$ hold.
\end{proposition}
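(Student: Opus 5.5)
The proof splits into two parts: first $\Omega\subset\Omega_{\mathscr{A}}$, then $T_{\mathscr{A}}(\Omega)\subset\Omega$. Both rest on a single observation that follows from \ref{enum:unique_final_state} and \ref{enum:reset_word}: reading $N$ consecutive zeros synchronizes the automaton to $q^*$, and after that any further zeros keep it at $q^*$ and output $0$.

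\textbf{Step 1: $\Omega\subset\Omega_{\mathscr{A}}$.} Fix $\eta\in\Omega$ and $n\in\mathbb{Z}$. Since $\eta\in\Omega_-$, we can choose $m$ with $-m<n-N$ such that $\eta_{-m-N+1}\cdots\eta_{-m}=0^N$.

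For any $q\in Q$ and any $k$ with $n-k\le -m-N+1$, we scan $\eta_{n-k}\cdots\eta_n$ from state $q$. When the scan reaches the block $\eta_{-m-N+1}\cdots\eta_{-m}$, the state entering it is some $q'$. By \ref{enum:reset_word}, the state after the block is $\varphi(q';0^N)=q^*$, whatever $q'$ was.

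The remainder $\eta_{-m+1}\cdots\eta_n$ is then scanned from $q^*$. So $\varphi(q;\eta_{n-k}\cdots\eta_n)=\varphi(q^*;\eta_{-m+1}\cdots\eta_n)$ for all such $k$, and this value does not depend on $q$ or $k$. Taking $k_0=n+m+N-1$ shows that the limit $Q_n$ exists and is independent of $q$. Hence $\eta\in\Omega_{\mathscr{A}}$.

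\textbf{Step 2: $T_{\mathscr{A}}(\Omega)\subset\Omega$.} Write $\tilde\eta=T_{\mathscr{A}}(\eta)$. We must show that $\tilde\eta$ contains arbitrarily long zero blocks arbitrarily far out in both directions.

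The key claim is the following. Suppose $\eta_a\cdots\eta_{a+L-1}=0^L$ with $L\ge N$. Then $Q_{a+N-1}=q^*$ by \ref{enum:reset_word}. Iterating \ref{enum:unique_final_state} gives $Q_j=q^*$ for $a+N-1\le j\le a+L-1$, and $\tilde\eta_j=\psi(Q_{j-1},\eta_j)=\psi(q^*,0)=0$ for $a+N\le j\le a+L-1$.

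Thus every zero block of length $L$ in $\eta$ produces a zero block of length $L-N$ in $\tilde\eta$, located inside the original block. Now fix $N'\ge 1$ and $M$. Because $\eta\in\Omega_+$, there are infinitely many $m$ with zero blocks of length $N'+N$ starting at $m$. Each gives $\tilde\eta\in A_{m+N,N'}$, and these indices $m+N$ are unbounded. Hence $\tilde\eta\in\Omega_+$. The same argument applied to $\Omega_-$, with blocks extending to the left, gives $\tilde\eta\in\Omega_-$.

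\textbf{Main obstacle.} The only subtlety is the index bookkeeping in Step 2. One has to make sure that the state entering the relevant site is already $q^*$ after $N$ zeros, and handle the case where $\eta$ is all zeros to the left. That case is still covered, since $\Omega$ only requires the existence of long blocks. Neither \ref{enum:bijective} nor the inverse part of \ref{enum:reset_word} is needed for this proposition.
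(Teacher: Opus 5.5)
Your proposal is correct and follows essentially the same route as the paper. Both arguments rest on two facts: the automaton synchronizes to $q^*$ after reading a block $0^N$, with $\Omega_-$ supplying such a block to the left of each site for well-definedness, and a zero block of length $L+N$ in $\eta$ yields a zero block of length $L$ in $T_{\mathscr{A}}(\eta)$ in each direction. Your Step 1 is slightly more explicit than the paper's, which only notes that $\eta$ contains infinitely many occurrences of $0^N$ without saying that one must lie to the left of each site $n$.
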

\begin{proof}
    By~\ref{enum:reset_word}, there exists $N\in\mathbb{Z}_{\geq 1}$ such that, for any $q\in Q$, $\varphi(q; 0^N)=q^*$. If $\eta\in \Omega$, then $\eta$ contains infinitely many occurrences of $0^N$. Hence, $\Omega\subset\Omega_{\mathscr{A}}$.
    
    Next, to prove $T_{\mathscr{A}}(\Omega)\subset\Omega$, we first show that $T_\mathscr{A}(\Omega)\subset \Omega_+$. Let $\eta\in \Omega$ and $\tilde{\eta}=T_\mathscr{A}(\eta)$. It suffices to show that for any $L, M\in \mathbb {Z}_{\geq 1}$, there exists an integer $m\geq M$ such that
    \begin{equation}
        \tilde{\eta}_m\tilde{\eta}_{m+1}\cdots\tilde{\eta}_{m+L-1} = 0^L.
    \end{equation}
    Since $\eta\in \Omega$, there exists an integer $m_0\geq M$ such that
    \begin{equation}
        \eta_{m_0}\eta_{m_0+1}\cdots\eta_{m_0+N+L-1} = 0^{N+L}.
    \end{equation}
    For any $q\in Q$, \ref{enum:reset_word} implies that $\varphi(q;\eta_{m_0}\eta_{m_0+1}\cdots\eta_{m_0+N-1})=q^*$ holds, and hence, by \ref{enum:unique_final_state}, we obtain
    \begin{equation}
        \psi(q;\eta_{m_0}\eta_{m_0+1}\cdots\eta_{m_0+N+L-1}) = \tilde{\mathbf{s}}\psi(q^*;\eta_{m_0+N}\eta_{m_0+N+1}\cdots\eta_{m_0+N+L-1})=\tilde{\mathbf{s}}0^L.
    \end{equation}
    Here, $\tilde{\mathbf{s}}=\psi(q;\eta_{m_0}\eta_{m_0+1}\cdots\eta_{m_0+N-1})$ is an element of $\{0,1\}^N$. Therefore,
    \begin{equation}
        \tilde{\eta}_{m_0+N}\tilde{\eta}_{m_0+N+1}\cdots\tilde{\eta}_{m_0+N+L-1}=0^L.
    \end{equation}

    For $T_{\mathscr{A}}(\Omega)\subset\Omega_-$, it suffices to show that for any $L, M\in\mathbb{Z}_{\geq 1}$, there exists an integer $m\leq -M$ such that
    \begin{equation}
        \tilde{\eta}_{m-L+1}\tilde{\eta}_{m-L+2}\cdots\tilde{\eta}_m=0^L.
    \end{equation}
    Since $\eta\in \Omega$, there exists an integer $m_0\leq -M$ such that
    \begin{equation}
        \eta_{m_0 - N-L+1}\eta_{m_0-N-L+2}\cdots\eta_{m_0}=0^{N+L}.
    \end{equation}
    By the same argument as in the case of $\Omega_+$, we obtain
    \begin{equation}
        \tilde{\eta}_{m_0-L+1}\tilde{\eta}_{m_0-L+2}\cdots\tilde{\eta}_{m_0}=0^L.
    \end{equation}
    Therefore, from $T_{\mathscr{A}}(\Omega)\subset\Omega_+$ and $T_{\mathscr{A}}(\Omega)\subset\Omega_-$, we find $T_{\mathscr{A}}(\Omega)\subset\Omega$.
\end{proof}

\begin{remark}
    If an automaton $\mathscr{A}= (Q, \{0,1\}, \varphi, \psi)$ satisfies \textup{\ref{enum:unique_final_state}} and \textup{\ref{enum:reset_word}}, under $\nu_M$, the map $T_\mathscr{A}$ may be iterated indefinitely, that is, $T^t_\mathscr{A}(\eta)$ is well-defined for every $t\geq 0$ for $\nu_M$-a.e. $\eta$. The inclusion $\Omega\subset \Omega_{\mathscr{A}}$ does not hold for the \textup{BBS} automaton (Example~\ref{ex:TakahashiSatsuma}), since \textup{\ref{enum:reset_word}} is not satisfied. Indeed, let $\mathscr{A}_{\textup{BBS}}$ denote the Mealy automaton corresponding to the \textup{BBS}, and define a configuration $\eta=\{\eta_n\}_{n\in\mathbb{Z}}$ as follows. For $n\geq 0$, let 
    \begin{equation}
        \{\eta_n\}_{n\geq 0} = 1^201^30^2\cdots1^{k+1}0^k\cdots
    \end{equation}
    and extend it to the negative half-line by $\eta_{-n}=\eta_{n-1}$ for $n>0$. Then $\eta\in\Omega$, but $\eta\notin\Omega_{\mathscr{A}_{\textup{BBS}}}$.
\end{remark}
\begin{remark}
    In Proposition~\textup{\ref{prop:TOmegainOmega}}, only the conditions on $\mathscr{A}$ in \textup{\ref{enum:unique_final_state}} and \textup{\ref{enum:reset_word}} are required; \textup{\ref{enum:bijective}} is not necessary. However, we need \textup{\ref{enum:bijective}} for Theorem~\textup{\ref{thm:inv_meas}}. 
\end{remark}

\begin{theorem}\label{thm:inv_meas}
    Let $\mathscr{A}= (Q, \{0,1\}, \varphi, \psi)$ be an automaton satisfying \textup{\ref{enum:bijective}}, \textup{\ref{enum:unique_final_state}}, and \textup{\ref{enum:reset_word}}. Let $N\in \mathbb{Z}_{\geq 1}$ be an integer provided by condition \textup{\ref{enum:reset_word}}. Then, $\nu_M$ is an invariant measure for $T_\mathscr{A}$ if and only if for each $j\in \{0,1\}$, any $n\in \mathbb{Z}_{\geq 1}$, and any $\mathbf{s}\in \{0,1\}^n$, 
    \begin{equation}\label{eq:equiv_condition}
        \frac{\sum_{q\in Q} \nu_M(\eta_1\cdots\eta_{N+n+3} = \tilde{\psi}(0^{N+1}\mathbf{s}j^2;q)\mid \eta_1 = 0)}{\sum_{q\in Q}\nu_M(\eta_1\cdots\eta_{N+n+2} = \tilde{\psi}(0^{N+1}\mathbf{s}j;q)\mid \eta_1 = 0)} = 
        \begin{cases}
        p \quad \text{if}\ j = 0\\
        r \quad \text{if}\ j = 1.
        \end{cases}
    \end{equation}
\end{theorem}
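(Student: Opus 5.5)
Write $\mu=\nu_M\circ T_{\mathscr{A}}^{-1}$ for the pushforward of $\nu_M$. By Propositions~\ref{prop:Mar_dist_supp} and~\ref{prop:TOmegainOmega}, $\mu$ is well defined and $\mu(\Omega)=1$. Since $T_{\mathscr{A}}$ commutes with the shift, $\mu$ is shift invariant. The plan is to compute cylinder probabilities of $\mu$ through the inverse automaton and then feed them into Lemma~\ref{lem:000}.

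\textbf{Step 1: cylinders of $\mu$ that start with $0^{N+1}$.} Take a word $\tilde{\mathbf{w}}=0^{N+1}\mathbf{u}$ of length $L$ and consider the event $\{\tilde\eta_1\cdots\tilde\eta_L=\tilde{\mathbf{w}}\}$.
\begin{itemize}
\item By \ref{enum:bijective}, the map $(Q_0,\eta_1\cdots\eta_L)\mapsto(\tilde\eta_1\cdots\tilde\eta_L,Q_L)$ is a bijection. Its inverse is $(\tilde{\mathbf{w}},q)\mapsto(\tilde\varphi(\tilde{\mathbf{w}};q),\tilde\psi(\tilde{\mathbf{w}};q))$.
\item Conditioning on $Q_L=q$ therefore gives
\begin{equation*}
\mu(\tilde\eta_1\cdots\tilde\eta_L=\tilde{\mathbf{w}})=\sum_{q\in Q}\nu_M\bigl(\eta_1\cdots\eta_L=\tilde\psi(\tilde{\mathbf{w}};q),\,Q_0=\tilde\varphi(\tilde{\mathbf{w}};q)\bigr).
\end{equation*}
\item The prefix $0^{N+1}$ removes the constraint on $Q_0$. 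By \ref{enum:reset_word}, scanning $0^N$ backwards with the inverse automaton always lands in $q^*$. Then \ref{enum:unique_final_state} implies $\tilde\varphi(q^*,0)=q^*$ and $\tilde\psi(q^*,0)=0$, so $\tilde\varphi(\tilde{\mathbf{w}};q)=q^*$ and $\tilde\psi(\tilde{\mathbf{w}};q)$ begins with $0$.
\item We need that the event $\{\eta_1=0,\,Q_0=q^*\}$ is, up to null sets, the same as $\{\eta_1=0\}$. I would obtain it from the same reset mechanism: $Q_1=\varphi(Q_0,0)$ together with uniqueness of the fixed point of the bijective map. This is where I expect care is needed.
\item Granting that, we get $\mu(\tilde{\mathbf{w}})=\nu_M(\eta_1=0)\sum_q\nu_M\bigl(\eta_1\cdots\eta_L=\tilde\psi(\tilde{\mathbf{w}};q)\mid\eta_1=0\bigr)$.
\end{itemize}
Taking the ratio for $\tilde{\mathbf{w}}=0^{N+1}\mathbf{s}j^2$ over $\tilde{\mathbf{w}}=0^{N+1}\mathbf{s}j$, the factor $\nu_M(\eta_1=0)$ cancels. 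Hence the left side of \eqref{eq:equiv_condition} equals $\mu(\tilde\eta_{n+N+3}=j\mid\tilde\eta_1\cdots\tilde\eta_{N+n+2}=0^{N+1}\mathbf{s}j)$.

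\textbf{Step 2: sufficiency.} Assume \eqref{eq:equiv_condition}. Then $\mu$ satisfies hypothesis \eqref{eq:mu_assum} of Lemma~\ref{lem:000} with $m=N+1$, except that the right side is the constant $p$ or $r$ rather than $\mu(\eta_{l+1}=j\mid\eta_l=j)$. Rerunning the proof of Lemma~\ref{lem:000} with these constants shows that $\mu(\eta_{n+1}=j\mid\eta_1\cdots\eta_n=\mathbf{s}j)$ equals $p$ or $r$ for all $\mathbf{s}$; in particular this identifies $\mu(\eta_{l+1}=j\mid\eta_l=j)$ with $p$ or $r$. On $\{0,1\}$ the diagonal entries determine the whole transition matrix, so $\mu$ is Markov with matrix $P$. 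A stationary Markov chain is determined by $P$, because its marginal is the stationary vector of $P$. Hence $\mu=\nu_M$.

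\textbf{Step 3: necessity.} If $\mu=\nu_M$, the Markov property of $\nu_M$ gives, via Step 1, that the ratio in \eqref{eq:equiv_condition} equals $\nu_M(\eta_{l+1}=j\mid\eta_l=j)$, which is $p$ or $r$. This direction is immediate.

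The main obstacle is Step 1: justifying that the constraint $Q_0=\tilde\varphi(\cdots)$ is automatically satisfied given $\eta_1=0$ and the $0^{N+1}$ prefix. This requires carefully combining the backward reset in \ref{enum:reset_word} with uniqueness in \ref{enum:unique_final_state}. If the almost-sure identification is not exact, I would instead shift the window so that the constraint concerns the state after a known $0^N$ block of $\eta$ itself, and adjust the indices accordingly.
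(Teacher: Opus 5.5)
Your architecture matches the paper's: push $\nu_M$ forward, invert with $(\tilde\varphi,\tilde\psi)$, use the $0^{N+1}$ prefix to force $(Q_0,\eta_1)=(q^*,0)$, and finish with Lemma~\ref{lem:000} at $m=N+1$. Steps~2 and~3 are sound. In Step~2 you are more careful than the paper about two points: Lemma~\ref{lem:000} is stated with $\mu(\eta_{l+1}=j\mid\eta_l=j)$ rather than the constants $p,r$, and $\mu$ still has to be identified with $\nu_M$.

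\textbf{The gap is in Step~1, where you suspected it.} The identification $\{\eta_1=0,\,Q_0=q^*\}=\{\eta_1=0\}$ up to null sets is false. $Q_0$ is a function of $(\eta_k)_{k\le 0}$, and $\eta_1=0$ does not determine it. For BBS-C(2) ($q^*=q_0$, $N=2$), the cylinder $\{\eta_{-2}\eta_{-1}\eta_0\eta_1=0010\}$ has positive $\nu_M$-probability, and on it $Q_0=\varphi(q_0,1)=q_1$.

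So your displayed formula for $\mu(\tilde{\mathbf w})$ is wrong. Take BBS-C(2) under the invariant measure $\nu_{1/2}$ and $\tilde{\mathbf w}=000$. The preimages $\tilde\psi(000;q)$ are $000,001,011$, and your formula gives $\tfrac12\cdot\tfrac34=\tfrac38$, whereas $\mu(000)=\tfrac18$.

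Your fallback of shifting the window does not remove the issue. Whatever window you use, the state entering it from the left is a function of the configuration outside the window, so it must be decoupled from the inside somehow.

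\textbf{The missing idea is the Markov property at site $1$, which is what the paper uses.} Since $\{Q_0=q^*\}\in\sigma(\eta_k:k\le 0)$, for every word $\mathbf v$ of length $L$ with $v_1=0$,
\[
\nu_M\bigl(Q_0=q^*,\ \eta_1\cdots\eta_L=\mathbf v\bigr)=\nu_M\bigl(Q_0=q^*,\ \eta_1=0\bigr)\,\nu_M\bigl(\eta_1\cdots\eta_L=\mathbf v\mid\eta_1=0\bigr).
\]
So the correct prefactor in Step~1 is $\nu_M(Q_0=q^*,\eta_1=0)$, not $\nu_M(\eta_1=0)$. In the example above it equals $\tfrac16$, since $Q_0$ is uniform on $\{q_0,q_1,q_2\}$ and independent of $\eta_1$ there, and indeed $\tfrac16\cdot\tfrac34=\tfrac18$.

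This prefactor is the same for every $\tilde{\mathbf w}$ beginning with $0^{N+1}$. It is positive because $\{\eta_{-N+1}\cdots\eta_1=0^{N+1}\}\subset\{Q_0=q^*,\eta_1=0\}$ by \ref{enum:reset_word}. Hence it cancels in the ratio, and your identity for $\mu(\tilde\eta_{N+n+3}=j\mid\tilde\eta_1\cdots\tilde\eta_{N+n+2}=0^{N+1}\mathbf{s}j)$ is correct. With this replacement the rest of your proof goes through.
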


\begin{proof}
    By Proposition~\ref{prop:Mar_dist_supp}, $\nu_M(\Omega)=1$. Moreover, by Proposition~\ref{prop:TOmegainOmega}, $T_\mathscr{A}(\Omega)\subset\Omega$. Hence
    \begin{equation}
        \nu_M(T_\mathscr{A}^{-1}(\Omega))=1.
    \end{equation}
    Since $T_\mathscr{A}$ commutes with shift operation, the distribution of $\tilde{\eta}=T_\mathscr{A}(\eta)$ is shift-invariant.
    
    We first derive a formula for the transition probabilities of a configuration $\tilde{\eta}$. This computation does not rely on either the invariance of $\nu_M$ or condition~\eqref{eq:equiv_condition}. Let $m\geq N+3$, $j\in\{0,1\}$, and
    $\mathbf{s}\in\{0,1\}^{m-N-2}$. We compute
    \begin{equation}
    \nu_M(\tilde{\eta}_{m+1}=j
    \mid
    \tilde{\eta}_1\cdots\tilde{\eta}_{m}
    =
    0^{N+1}\mathbf{s}j)
    =
    \frac{
        \sum_{q\in Q}
        \nu_M(
            \tilde{\eta}_1\cdots\tilde{\eta}_{m+1}
            =
            0^{N+1}\mathbf{s}j^2,\,
            Q_{m+1}=q
        )
    }{
        \sum_{q\in Q}
        \nu_M(
            \tilde{\eta}_1\cdots\tilde{\eta}_{m}
            =
            0^{N+1}\mathbf{s}j,\,
            Q_m=q
        )
    }.
    \label{eq:common_transition_1}
    \end{equation}
    Using condition~\textup{\ref{enum:reset_word}} together with the inverse transition map guaranteed by \textup{\ref{enum:bijective}}, we obtain
    \begin{equation}
    \frac{
        \sum_{q\in Q}
        \nu_M(
            \tilde{\eta}_1\cdots\tilde{\eta}_{m+1}
            =
            0^{N+1}\mathbf{s}j^2,\,
            Q_{m+1}=q
        )
    }{
        \sum_{q\in Q}
        \nu_M(
            \tilde{\eta}_1\cdots\tilde{\eta}_{m}
            =
            0^{N+1}\mathbf{s}j,\,
            Q_m=q
        )
    }
    =
    \frac{
        \sum_{q\in Q}
        \nu_M(
            Q_0=q^*,\,
            \eta_1\cdots\eta_{m+1}
            =
            \tilde{\psi}(0^{N+1}\mathbf{s}j^2;q)
        )
    }{
        \sum_{q\in Q}
        \nu_M(
            Q_0=q^*,\,
            \eta_1\cdots\eta_{m}
            =
            \tilde{\psi}(0^{N+1}\mathbf{s}j;q)
        )
    }.
    \label{eq:common_transition_2}
    \end{equation}
    We next rewrite the right-hand side of~\eqref{eq:common_transition_2}. For each $q\in Q$, the event in the numerator
    \begin{equation}
        \left\{
        Q_0=q^* \quad \text{and}\quad
        \eta_1\cdots\eta_{m+1}
        =
        \tilde{\psi}(0^{N+1}\mathbf{s}j^2;q)
        \right\}
    \end{equation}
    implies that $(Q_0,\eta_1)=(q^*,0)$. Indeed, by \textup{\ref{enum:reset_word}}, $Q_1=q^*$. Moreover, \textup{\ref{enum:bijective}} and \textup{\ref{enum:unique_final_state}} imply that
    \begin{equation}
        (Q_0,\eta_1)
        =
        (\tilde{\varphi}(Q_1, \tilde{\eta}_1), \tilde{\psi}(Q_1, \tilde{\eta}_1))
        =
        (\tilde{\varphi}(q^*,0),\tilde{\psi}(q^*,0))
        =
        (q^*,0).
    \end{equation}
    The same observation applies to the denominator. Hence,
    \begin{align}
         \frac{\sum_{q\in Q} \nu_M(Q_0 = q^*,\eta_1\cdots \eta_{m+1} = \tilde{\psi}(0^{N+1}\mathbf{s}j^2;q))}{\sum_{q\in Q} \nu_M(Q_0 = q^*,\eta_1\cdots\eta_{m} = \tilde{\psi}(0^{N+1}\mathbf{s}j;q))}
         &= \frac{\sum_{q\in Q}E_{\nu_M}[\nu_M(\eta_1\cdots \eta_{m+1} = \tilde{\psi}(0^{N+1}\mathbf{s}j^2;q)\mid\mathcal{F}_1);\{(Q_0,\eta_1)=(q^*,0)\}]}{\sum_{q\in Q}E_{\nu_M}[\nu_M(\eta_1\cdots\eta_{m} = \tilde{\psi}(0^{N+1}\mathbf{s}j;q)\mid \mathcal{F}_1);\{(Q_0,\eta_1) = (q^*,0)\}]} \\
         &= \frac{\sum_{q\in Q}\nu_M(\eta_1\cdots \eta_{m+1} = \tilde{\psi}(0^{N+1}\mathbf{s}j^2;q)\mid \eta_1=0)\nu_M((Q_0,\eta_1)=(q^*,0))}{\sum_{q\in Q}\nu_M(\eta_1\cdots\eta_{m} = \tilde{\psi}(0^{N+1}\mathbf{s}j;q)\mid \eta_1=0)\nu_M((Q_0,\eta_1) = (q^*,0))} \\
         &=\frac{\sum_{q\in Q}\nu_M(\eta_1\cdots\eta_{m+1} = \tilde{\psi}(0^{N+1}\mathbf{s}j^2;q)\mid \eta_1 = 0)}{\sum_{q\in Q}\nu_M(\eta_1\cdots\eta_{m} = \tilde{\psi}(0^{N+1}\mathbf{s}j;q)\mid \eta_1=0)}.
     \end{align}
    Therefore, we obtain
    \begin{equation}
    \nu_M(\tilde{\eta}_{m+1}=j
    \mid
    \tilde{\eta}_1\cdots\tilde{\eta}_{m}
    =
    0^{N+1}\mathbf{s}j)
    =\frac{\sum_{q\in Q}\nu_M(\eta_1\cdots\eta_{m+1} = \tilde{\psi}(0^{N+1}\mathbf{s}j^2;q)\mid \eta_1 = 0)}{\sum_{q\in Q}\nu_M(\eta_1\cdots\eta_{m} = \tilde{\psi}(0^{N+1}\mathbf{s}j;q)\mid \eta_1=0)}.
    \label{eq:if_part}
    \end{equation}

    We now prove the two implications. First, assume that condition~\eqref{eq:equiv_condition} holds. For a word $\mathbf{s}\in\{0,1\}^n$ appearing in~\eqref{eq:equiv_condition}, set $m=N+n+2$. Then $m\geq N+3$ and $\mathbf{s}\in\{0,1\}^{m-N-2}$. Thus,~\eqref{eq:if_part} coincides exactly with the left-hand side of~\eqref{eq:equiv_condition}. Therefore, 
    \begin{equation}
        \nu_M(\tilde{\eta}_{m+1} = j
        \mid
        \tilde{\eta}_1\cdots\tilde{\eta}_{m}
        =
        0^{N+1}\mathbf{s}j)
        =
        \begin{cases}
        p \quad \text{if } j = 0,\\
        r \quad \text{if } j = 1.
        \end{cases}
    \end{equation}
    Together with Lemma~\ref{lem:000}, the distribution of $\tilde{\eta}$ is the Markov distribution with the same transition matrix as $\nu_M$. Therefore, $\nu_M$ is an invariant measure for $T_{\mathscr{A}}$.

    Conversely, assume that $\nu_M$ is an invariant measure for $T_{\mathscr{A}}$. Then the process $\tilde{\eta}=T_{\mathscr{A}}(\eta)$ has distribution $\nu_M$. Let $n\in\mathbb{Z}_{\geq 1}$, $\mathbf{s}\in\{0,1\}^n$, and $j\in\{0,1\}$ be arbitrary, and set $m=N+n+2$. Then $m\geq N+3$ and $\mathbf{s}\in\{0,1\}^{m-N-2}$. Hence, we may apply~\eqref{eq:if_part}. In particular, its transition probabilities coincide with those of the original Markov chain:
    \begin{equation}
        \nu_M(\tilde{\eta}_{m+1} = j
        \mid
        \tilde{\eta}_1\cdots\tilde{\eta}_{m}
        =
        0^{N+1}\mathbf{s}j)
        =
        \begin{cases}
        p \quad \text{if } j = 0,\\
        r \quad \text{if } j = 1.
        \end{cases}
    \end{equation}
    Comparing this with~\eqref{eq:if_part} yields condition~\eqref{eq:equiv_condition}. This proves the converse implication.
\end{proof}

Theorem~\ref{thm:Ber_inv_meas} yields a sufficient condition for the Bernoulli distribution to be an invariant measure for $T_{\mathscr{A}}$.
 \begin{theorem}\label{thm:Ber_inv_meas}
     Suppose that an automaton $\mathscr{A}= (Q, \{0,1\}, \varphi, \psi)$ is particle-preserving and satisfies \textup{\ref{enum:bijective}}, \textup{\ref{enum:unique_final_state}}, and \textup{\ref{enum:reset_word}}. Then, the Bernoulli distribution $\nu_p$ is an invariant measure for $T_{\mathscr{A}}$.
 \end{theorem}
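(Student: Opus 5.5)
Under $\nu_p$ every cylinder has probability $p^{\#0}(1-p)^{\#1}$. The plan is to specialize Theorem~\ref{thm:inv_meas} to $r=1-p$ and verify condition~\eqref{eq:equiv_condition} directly. Since the transition probability from $1$ to $1$ is now $1-p$, the target right-hand side is $p$ for $j=0$ and $1-p$ for $j=1$, i.e.\ $\nu_p(\eta=j)$ in both cases. Conditioning on $\eta_1=0$ simply removes the first factor, so for any word $\mathbf{u}$ of length $\ell$ with first letter $0$ we have $\nu_p(\eta_1\cdots\eta_\ell=\mathbf{u}\mid\eta_1=0)=p^{\#_0(\mathbf{u})-1}(1-p)^{\#_1(\mathbf{u})}$. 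It therefore suffices to control the number of ones in the preimage words $\tilde{\psi}(0^{N+1}\mathbf{s}j^2;q)$ and $\tilde{\psi}(0^{N+1}\mathbf{s}j;q)$ as $q$ ranges over $Q$.

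The key input is particle preservation. By Remark~\ref{rem:particle_preserving}, take $w(s)=s$. Summing the conservation law~\eqref{eq:conserv} along a scanned word gives, for the input $\mathbf{u}=\tilde\psi(\tilde{\mathbf{u}};q)$ with initial state $Q_0=\tilde\varphi(\tilde{\mathbf{u}};q)$ and final state $q$,
\begin{equation*}
w(Q_0)+\#_1(\mathbf{u}) = w(q)+\#_1(\tilde{\mathbf{u}}).
\end{equation*}
Because $\tilde{\mathbf{u}}$ begins with $0^{N+1}$, the reset assumption~\ref{enum:reset_word} for the inverse automaton, together with~\ref{enum:bijective} and~\ref{enum:unique_final_state} as in the proof of Theorem~\ref{thm:inv_meas}, forces the backward scan to be in state $q^*$ before reaching the leading zeros, with $Q_0=q^*$. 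Hence $\#_1(\mathbf{u})=\#_1(\tilde{\mathbf{u}})+w(q)-w(q^*)$, and this is independent of everything except the terminal state $q$. Lengths are preserved, so $\#_0(\mathbf{u})$ is determined as well.

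Next I would compare numerator and denominator term by term, using the map $q\mapsto q'$ that records the state reached after the backward scan processes the final letter $j$. Concretely, $\tilde{\psi}(0^{N+1}\mathbf{s}j^2;q)=\tilde{\psi}(0^{N+1}\mathbf{s}j;q')\,b$ with $(q',b)=(\tilde\varphi(q,j),\tilde\psi(q,j))$. By bijectivity~\ref{enum:bijective}, $(q,j)\mapsto(q',b)$ is a bijection onto $\{(q',b)\}$ with $\varphi(q',b)=q$ and $\psi(q',b)=j$. So for fixed $q'$, the terms of the numerator are indexed by those $b\in\{0,1\}$ with $\psi(q',b)=j$. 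Conservation gives $w(q')+b=w(\varphi(q',b))+j$, so each such term equals the denominator term for $q'$ times $\nu_p(\eta=b)$.

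The crux, and the step I expect to be the main obstacle, is to show that the factor contributed for each $q'$ is exactly $\nu_p(\eta=j)$. The natural guess is that $\psi(q',b)=j$ holds for exactly one $b$ and that $b=j$. This is not automatic: a state might output $j$ for both inputs, or output $j$ on input $1-j$, which shifts the weight of $q$. I would handle this by summing the identity over all $q'$ rather than term by term. Writing the numerator as $\sum_{q'}D(q')\sum_{b:\psi(q',b)=j}\nu_p(b)$, where $D(q')$ is the denominator term, I would use the weight bookkeeping above to regroup the terms by $\#_1$ of the input word. The aim is to show that the total mass of input words equals the total mass of output words $0^{N+1}\mathbf{s}j^2$ weighted by $p^{\#_0}(1-p)^{\#_1}$. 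This would follow from the bijection between (input word, start $q^*$) and (output word, end state), combined with the fact that the probability depends only on $\#_1$, which is shifted by $w(q)-w(q^*)$.

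If this global regrouping runs into normalization trouble from the $w(q)$ shift, my fallback is to prove invariance directly. I would show that $T_\mathscr{A}$ maps cylinder sets measure-preservingly: the pushforward of a cylinder preceded by $0^{N+1}$ is computed by the same bijection, and the Bernoulli weight is preserved up to the terminal weight $w(q)$. The mismatch from $w(q)$ would then be absorbed by integrating over the stationary state distribution, whose boundary contributions cancel by shift invariance.
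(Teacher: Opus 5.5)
Your setup matches the paper's. You specialize Theorem~\ref{thm:inv_meas} to $r=1-p$, and you note that the backward scan of a word beginning with $0^{N+1}$ is forced to start at $Q_0=q^*$ with $\eta_1=0$. Particle preservation then gives $\#_1(\tilde\psi(\tilde{\mathbf u};q))=\#_1(\tilde{\mathbf u})+w(q)-w(q^*)$, and that identity is correct: it is precisely the paper's key step.

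The gap is that you never turn this identity into the required ratio. You pair the numerator term with terminal state $q$ against the denominator term at $q'=\tilde\varphi(q,j)$. You then notice, correctly, that the per-$q'$ factor $\sum_{b:\psi(q',b)=j}\pi(b)$, with $\pi(b)=p^{1-b}(1-p)^{b}$, need not equal $\pi(j)$. At that point you leave the decisive step to a ``global regrouping'' and a ``fallback'', neither of which is carried out. The regrouping target is also off. The total mass of the preimage words is the Bernoulli mass of the output word \emph{times} $Z=\sum_{q\in Q}\bigl((1-p)/p\bigr)^{w(q)-w(q^*)}$, not that mass itself. This $Z$ is the ``normalization trouble'' you anticipate, and you do not say how it is disposed of. The fallback via the stationary carrier law is only a sketch and is not needed.

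The missing observation is to compare numerator and denominator at the \emph{same} terminal state $q$. Apply your identity to $0^{N+1}\mathbf s j^2$ and to $0^{N+1}\mathbf s j$ with the same $q$. The two preimage words have lengths differing by $1$ and numbers of ones differing by exactly $j$, so for every $q\in Q$
\[
\nu_p\bigl(\eta_1\cdots\eta_{N+n+3}=\tilde\psi(0^{N+1}\mathbf s j^2;q)\mid\eta_1=0\bigr)=\pi(j)\,\nu_p\bigl(\eta_1\cdots\eta_{N+n+2}=\tilde\psi(0^{N+1}\mathbf s j;q)\mid\eta_1=0\bigr).
\]
Summing over $q$ gives \eqref{eq:equiv_condition} with $r=1-p$. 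Equivalently, every term is a common constant times $\bigl((1-p)/p\bigr)^{w(q)-w(q^*)}$. So $Z$ appears identically in numerator and denominator and cancels; it is finite, since by bijectivity the cylinders for distinct $q$ are disjoint. This is exactly the paper's argument.

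Your $q'$-pairing also closes at once if you use conservation, $w(q')+b=w(q)+j$, one more time. Write $D(\cdot)$ for the denominator terms. Then $D(q')\pi(b)=\pi(j)D(q)$ with $q=\varphi(q',b)$. The map $(q',b)\mapsto q$ is a bijection from $\{(q',b):\psi(q',b)=j\}$ onto $Q$, so the per-$q'$ obstruction you worried about never actually arises.
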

 \begin{proof}
      Combining Theorem~\ref{thm:inv_meas} with the fact that $\nu_p$ is a product measure, it is sufficient to prove that for each $j\in\{0,1\}$, any $n\in\mathbb{Z}_{\geq1}$, and any $\mathbf{s}\in\{0,1\}^n$,
     \begin{equation}\label{eq:target_Bernoulli}
         \frac{\sum_{q\in Q}\nu_p(\eta_1\cdots\eta_{N+n+3}=\tilde{\psi}(0^{N+1}\mathbf{s}j^2;q))}{\sum_{q\in Q}\nu_p(\eta_1\cdots\eta_{N+n+2}=\tilde{\psi}(0^{N+1}\mathbf{s}j;q))}
         =
         \begin{cases}
             p &\text{if}\ j=0\\
             1-p&\text{if}\ j=1
         \end{cases}
     \end{equation}
     holds. Here, $N$ is as defined in \ref{enum:reset_word}. By Remark~\ref{rem:particle_preserving}, since the automaton $\mathscr{A}$ is particle-preserving, we may assume without loss of generality that the associated weight function $w$ satisfies $w(0) = 0, w(1) = 1$, and $w(q^*)=0$. Using the weight function $w$, we can write the probability of $\eta_1\eta_2\cdots\eta_{n}=r_1r_2\cdots r_n$ under $\nu_p$ as follows. First, we have
     \begin{equation}
         \nu_p(\eta_1\eta_2\cdots\eta_n = r_1r_2\cdots r_n)= p^{n-S}(1-p)^{S}.
     \end{equation}
     Here, $S$ denotes the number of $1$'s in $r_1r_2\cdots r_n$, namely,
     \begin{equation}
         S = \sum_{k=1}^n w(r_k).
     \end{equation}
     Let $\eta_1\eta_2\cdots\eta_{n+N+2}=\tilde{\psi}(0^{N+1}\mathbf{s}j;q)$ for $j\in\{0,1\}$ and $\mathbf{s}=s_1s_2\cdots s_n\in\{0,1\}^n$. Note that $Q_0 = q^*$ by \ref{enum:reset_word}. Then, the particle-preserving property implies that
     \begin{equation}
         w(q^*) + \sum_{k=1}^{n+N+2}w(\eta_k) =(N+1)w(0) + \sum_{k=1}^n w(s_k) + w(j) + w(q)
     \end{equation}
     holds. We define $S(\mathbf{s},j)$ by
     \begin{equation}
         S(\mathbf{s},j) = \sum_{k=1}^nw(s_k) + w(j).
     \end{equation}
     Note that $S(\mathbf{s},j)$ does not depend on $q$. Hence, we obtain
     \begin{equation}
         \sum_{k=1}^{n+N+2}w(\eta_k) = S(\mathbf{s}, j) + w(q).
     \end{equation}
     Therefore,
     \begin{equation}
         \nu_p(\eta_1\cdots\eta_{N+n+2}=\tilde{\psi}(0^{N+1}\mathbf{s}j;q)) = [p^{n+N+2-S(\mathbf{s},j)}(1-p)^{S(\mathbf{s},j)}]\cdot p^{-w(q)}(1-p)^{w(q)}
         \label{eq:nup_1}
     \end{equation}
     holds. By the same argument, we obtain
     \begin{equation}
         \nu_p(\eta_1\cdots\eta_{N+n+3}=\tilde{\psi}(0^{N+1}\mathbf{s}j^2;q)) = [p^{n+N+3-S(\mathbf{s},j)}(1-p)^{S(\mathbf{s},j)}]\cdot p^{-w(q)-w(j)}(1-p)^{w(q)+w(j)}.
         \label{eq:nup_2}
     \end{equation}
     After canceling the common factors $p^{n+N+2-S(\mathbf{s},j)}(1-p)^{S(\mathbf{s},j)}$ in \eqref{eq:nup_1} and \eqref{eq:nup_2}, the left-hand side of \eqref{eq:target_Bernoulli} becomes
     \begin{align}
         \frac{\sum_{q\in Q}\nu_p(\eta_1\cdots\eta_{N+n+3}=\tilde{\psi}(0^{N+1}\mathbf{s}j^2;q))}{\sum_{q\in Q}\nu_p(\eta_1\cdots\eta_{N+n+2}=\tilde{\psi}(0^{N+1}\mathbf{s}j;q))}
         &= \frac{p^{1-w(j)}(1-p)^{w(j)}\sum_{q\in Q}p^{-w(q)}(1-p)^{w(q)}}{\sum_{q\in Q}p^{-w(q)}(1-p)^{w(q)}} \\
         &=p^{1-w(j)}(1-p)^{w(j)}\\
         &=
         \begin{cases}
             p &\text{if} \ j=0 \\
             1-p &\text{if} \ j= 1.
         \end{cases}
     \end{align}
 \end{proof}

In the following sections, we examine whether the Markov distribution $\nu_M$ is invariant for the three models introduced in Section~\ref{sec:models}. Even when $\nu_M$ is not invariant, the marginal distribution at each site remains unchanged for all three models. This is a consequence of particle preservation.
\begin{theorem}
    Let $\mathscr{A} = (Q, \{0,1\}, \varphi, \psi)$ be an automaton. Let $\mu$ be a probability measure on $\{0,1\}^\mathbb{Z}$ that satisfies the following assumptions:
    \begin{enumerate}
        \item $\mu$ is shift-invariant.
        \item $\mu(\Omega_\mathscr{A})=1$.
        \item $\mathscr{A}$ is particle-preserving.
    \end{enumerate}
    Then, the output measure for $\tilde{\eta} = T_{\mathscr{A}}(\eta)$ is also shift-invariant. Moreover, its marginal distribution is identical to that of the input measure $\mu$, that is,
    \begin{equation}
        \mu(\tilde{\eta}_n=1)= \mu(\eta_n=1).
    \end{equation}
\end{theorem}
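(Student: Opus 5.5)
The plan is to use the particle-preserving conservation law site by site along the carrier trajectory, take expectations, and let stationarity of the carrier states cancel the state terms.

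\textbf{Shift invariance of the output.} Let $\theta$ denote the left shift, $(\theta\eta)_n=\eta_{n+1}$. First check directly from the definition of $\Omega_{\mathscr{A}}$ that $\theta(\Omega_{\mathscr{A}})=\Omega_{\mathscr{A}}$. Then check that the limiting carrier states satisfy $Q_n(\theta\eta)=Q_{n+1}(\eta)$: the limit defining $Q_n$ only involves the window $\eta_{n-k}\cdots\eta_n$, which is shifted along with the configuration. It follows that $T_{\mathscr{A}}\circ\theta=\theta\circ T_{\mathscr{A}}$ on $\Omega_{\mathscr{A}}$. Since $\mu(\Omega_{\mathscr{A}})=1$ and $\mu\circ\theta^{-1}=\mu$, the push-forward $\mu\circ T_{\mathscr{A}}^{-1}$ is also $\theta$-invariant. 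The same argument shows that the process $(Q_n,\eta_n,\tilde{\eta}_n)_{n\in\mathbb{Z}}$ is jointly stationary under $\mu$. In particular, $Q_{n-1}$ and $Q_n$ have the same law.

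\textbf{The conservation law along the trajectory.} By Remark~\ref{rem:particle_preserving}, take a weight $w$ with $w(s)=s$ on $\{0,1\}$. For $\mu$-a.e.\ $\eta$ we have $Q_n=\varphi(Q_{n-1},\eta_n)$ and $\tilde{\eta}_n=\psi(Q_{n-1},\eta_n)$. Applying \eqref{eq:conserv} then gives
\begin{equation}
X_{n-1}-X_n=\tilde{\eta}_n-\eta_n, \qquad X_n:=w(Q_n).
\end{equation}
Taking expectations, it therefore suffices to show $E_\mu[X_{n-1}-X_n]=0$.

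\textbf{Main obstacle: integrability of $X_n$.} This is the delicate step. The state space $Q$ may be countably infinite and $w$ unbounded, as in the BBS example, so $X_n$ need not be integrable. In that case one cannot simply write $E[X_{n-1}]-E[X_n]=0$.

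I would avoid this by truncation. For $K>0$ let $f_K(x)=\max(-K,\min(K,x))$. Then $f_K(X_n)$ is bounded, and by stationarity $E[f_K(X_{n-1})-f_K(X_n)]=0$. Since $f_K$ is $1$-Lipschitz,
\begin{equation}
|f_K(X_{n-1})-f_K(X_n)|\le |X_{n-1}-X_n|=|\tilde{\eta}_n-\eta_n|\le 1.
\end{equation}
Moreover, $X_{n-1}$ and $X_n$ are finite real numbers, so $f_K(X_{n-1})-f_K(X_n)\to X_{n-1}-X_n$ pointwise as $K\to\infty$. Dominated convergence then yields $E_\mu[X_{n-1}-X_n]=0$.

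Combining the three steps gives $E_\mu[\tilde{\eta}_n]=E_\mu[\eta_n]$, i.e.\ $\mu(\tilde{\eta}_n=1)=\mu(\eta_n=1)$.
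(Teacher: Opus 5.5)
Your proof is correct. Its overall structure matches the paper's: $T_{\mathscr{A}}$ commutes with the shift, the carrier states are stationary, and the site-wise conservation law gives $w(Q_{n-1})-w(Q_n)=\tilde{\eta}_n-\eta_n$. The difference lies in the one delicate step, handling the possibly non-integrable $w(Q_n)$.

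\textbf{The paper's route.} It sums the conservation law over $[1,L]$ to obtain $\frac{1}{L}\sum_{n=1}^{L}(\tilde{\eta}_n-\eta_n)=\frac{w(Q_0)-w(Q_L)}{L}$. Since $Q_L\stackrel{d}{=}Q_0$, the term $w(Q_L)/L$ tends to $0$ in probability. Because the left-hand side is bounded by $1$, this upgrades to $L^1$ convergence. Stationarity then identifies the expectation of the left-hand side with $\mu(\tilde{\eta}_n=1)-\mu(\eta_n=1)$.

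\textbf{Your route.} You stay at a single site. You truncate with the $1$-Lipschitz clamp $f_K$, so that $E_\mu[f_K(X_{n-1})-f_K(X_n)]=0$ by equality in law. You then pass to the limit by dominated convergence, using the constant bound $|X_{n-1}-X_n|\le 1$.

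\textbf{Comparison.} Both arguments prove the same elementary fact: if two real random variables are equal in law and their difference is bounded, the difference has mean zero. Both use exactly the same inputs. Yours is shorter and purely local, with no averaging limit. The paper's telescoping version is closer to the physical picture of conservation on a window, where the particle count changes only through the boundary carrier terms $w(Q_0)$ and $w(Q_L)$. Your explicit checks that $Q_n=\varphi(Q_{n-1},\eta_n)$ on $\Omega_{\mathscr{A}}$ and that $Q_n\circ\theta=Q_{n+1}$ also fill in details the paper only asserts.
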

\begin{proof}
    First, note that the map $T_\mathscr{A}$ commutes with the shift operator. Therefore, the shift-invariance of $\mu$ implies the shift-invariance of the output measure. Similarly, the distribution of the carrier-state process $Q_n$ is stationary. In fact, we have
    \begin{equation}
        Q_n = \lim_{k\to\infty}\varphi(q;\eta_{n-k} \cdots\eta_{n})\stackrel{d}{=}\lim_{k\to \infty}\varphi(q;\eta_{n-1-k}\cdots\eta_{n-1})=Q_{n-1}.
    \end{equation}

    By Remark~\ref{rem:particle_preserving}, we may assume that the weight function satisfies $w(0)=0$ and $w(1)=1$. Summing both sides of the particle-preserving relation \eqref{eq:conserv} over the interval $[1,L]$ and dividing by the length $L$, we obtain
    \begin{equation}\label{eq:sum_particle_preserving}
        \frac{1}{L}\sum_{n=1}^L(w(\tilde{\eta}_n)-w(\eta_n)) = \frac{w(Q_0)-w(Q_L)}{L}.
    \end{equation}
    Since $Q_0 \stackrel{d}{=} Q_L$, for any $\varepsilon>0$,
    \begin{equation}
        \lim_{L\to\infty}P\left(\frac{w(Q_L)}{L}>\varepsilon\right) = \lim_{L\to\infty}P\left(\frac{w(Q_0)}{L}>\varepsilon\right) = 0.
    \end{equation}
    Therefore, the right-hand side of~\eqref{eq:sum_particle_preserving} converges to $0$ in probability as $L\to\infty$. Since the absolute value of the left-hand side of~\eqref{eq:sum_particle_preserving} is bounded by $1$, this convergence is in fact in $\mathcal{L}^1$. Thus, we find
    \begin{align}
        |\mu(\tilde{\eta}_n = 1)- \mu(\eta_n = 1)| &= \left| E_\mu\left[ 
        \frac{1}{L}\sum_{n=1}^L(w(\tilde{\eta}_n)-w(\eta_n))\right] \right| \\
        &\leq E_\mu\left[ \left| \frac{w(Q_0) - w(Q_L)}{L}\right|\right] \\
        &\to 0 \quad (\text{as}\ L\to \infty).
    \end{align}
    This completes the proof.
\end{proof}

\section{Soliton Models Associated with 2-Letter, 3-State Mealy Automata}
\label{sec:models}

In this section, we introduce the three soliton models associated with $2$-letter, $3$-state Mealy automata classified in~\cite{maeno2025solitons}. These models provide concrete examples of automata satisfying the assumptions developed in the previous section. We first describe their dynamics both in terms of carrier rules and Mealy automata representations, and then verify that the corresponding time evolutions are well-defined on the configuration space $\Omega$.

\subsection{Model Description}

Among the $3$-state Mealy automata over a $2$-letter alphabet, those that are bijective, transitive, particle-preserving, and locally interacting fall into exactly three isomorphism classes, namely BBS-C(2), BBS-S(2), and BBS-V(2)~\cite{maeno2025solitons}: BBS-C(2) is the classical capacity-two BBS. BBS-S(2) admits a linearization. BBS-V(2) is equivalent to the ultradiscrete Lotka-Volterra equation. Moreover, all three exhibit solitonic behavior.

Here we describe each of the three models, BBS-C(2), BBS-S(2), and BBS-V(2). The first model, BBS-C(2), is a BBS with a finite carrier capacity of $2$. Thus, the carrier can hold at most two balls, and once it reaches capacity, it simply passes occupied sites without picking up additional balls. The second model is termed BBS-S(2). Its carrier has capacity $1$. When the carrier picks up a ball, it moves two sites ahead and drops the ball there if the site is empty. Otherwise, it again moves two sites ahead and repeats the same procedure. In BBS-V(2), the carrier has capacity $1$ and deposits a picked-up ball at the second nearest empty site. The configuration between the pickup and drop-off sites remains unchanged. Examples of the time evolution for each model are shown in Fig.~\ref{fig:time_evolution}.

\begin{figure}[htbp]
        \centering
        \[
        \text{BBS-C(2)} \
        \begin{array}{ll}
            \text{t=0:} & \texttt{0010111000011011000000000000} \\
            \text{t=1:} & \texttt{0001001110000101110000000000} \\
            \text{t=2:} & \texttt{0000100011100010011100000000} \\
            \text{t=3:} & \texttt{0000010000111001000111000000} \\
            \text{t=4:} & \texttt{0000001000001110100001110000} \\
            \text{t=5:} & \texttt{0000000100000011011000011100} \\
        \end{array}
        \] \\
        \[
        \text{BBS-S(2)} \
        \begin{array}{ll}
            \text{t=0:} & \texttt{0010111000011011000000000000} \\
            \text{t=1:} & \texttt{0000111010001101100000000000} \\
            \text{t=2:} & \texttt{0000011010100110110000000000} \\
            \text{t=3:} & \texttt{0000001100101011011000000000} \\
            \text{t=4:} & \texttt{0000000110001011101100000000} \\
            \text{t=5:} & \texttt{0000000011000011101110000000} \\
        \end{array}
        \] \\
        \[
        \text{BBS-V(2)} \
        \begin{array}{ll}
            \text{t=0:} & \texttt{0010111000011011000000000000} \\
            \text{t=1:} & \texttt{0000111100001011100000000000} \\
            \text{t=2:} & \texttt{0000011101000011110000000000} \\
            \text{t=3:} & \texttt{0000001101100001110100000000} \\
            \text{t=4:} & \texttt{0000000101110000110110000000} \\
            \text{t=5:} & \texttt{0000000001111000010111000000} \\
        \end{array}
        \]
        \caption{Examples of time evolutions for the three models, BBS-C(2), BBS-S(2), and BBS-V(2)}
        \label{fig:time_evolution}
\end{figure}

\subsection{Automaton Representations}

On the other hand, these models can be represented as Mealy automata. Let $Q = \{ q_0, q_1, q_2 \}$ and $S = \{ 0, 1 \}$.
Here, $s_k \in S$ corresponds to the number of balls at a given site $k$. Besides, $Q_k \in Q$ represents a carrier state right after passing a site $k$. The time evolution of each automaton is then described as follows: For BBS-C(2), the transition table is given in Table~\ref{tab:C2}. The table shows how each pair $(q,s)$ is updated by the automaton. That is, for each $(q,s)$, the corresponding entry gives $(\varphi(q,s), \psi(q,s))$. For example, $(q_0, 0)$ is mapped to $(\varphi(q_0,0), \psi(q_0,0))=(q_0, 0)$, $(q_1, 0)$ is mapped to $(\varphi(q_1,0), \psi(q_1,0))=(q_0, 1)$, and similarly for the other cases. In this case, $q_0$, $q_1$, and $q_2$ correspond respectively to the situations where the carrier holds $0$, $1$, and $2$ balls. In Fig.~\ref{fig:time_evolution_state_C}, we show the time evolution and the associated state sequences.

\begin{table}[htbp]
    \centering
    \caption{Transition table for BBS-C(2), giving $(\varphi(q,s), \psi(q,s))$}
    \label{tab:C2}
    \begin{tabular}{c|ccc}
        \hline
        \diagbox{$s$}{$q$} & $q_0$ & $q_1$ & $q_2$ \\
        \hline
        $0$ & $(q_0,0)$ & $(q_0,1)$ & $(q_1,1)$ \\
        $1$ & $(q_1,0)$ & $(q_2,0)$ & $(q_2,1)$ \\
        \hline
    \end{tabular}
\end{table}

\begin{figure}[htbp]
        \centering
        \small
        \[
        \text{BBS-C(2)}
        \quad
        \begin{array}{lll}
            \text{t=0:} & s_n & \texttt{0010111000011011000000000000} \\
            & \rd{Q_n} & \rd{\texttt{0010122100012122100000000000}} \\
            \text{t=1:} & s_n & \texttt{0001001110000101110000000000} \\
            & \rd{Q_n} & \rd{\texttt{0001001221000101221000000000}} \\
            \text{t=2:} & s_n & \texttt{0000100011100010011100000000} \\
            & \rd{Q_n} & \rd{\texttt{0000100012210010012210000000}} \\
            \text{t=3:} & s_n & \texttt{0000010000111001000111000000} \\
            & \rd{Q_n} & \rd{\texttt{0000010000122101000122100000}} \\
            \text{t=4:} & s_n & \texttt{0000001000001110100001110000} \\
            & \rd{Q_n} & \rd{\texttt{0000001000001221210001221000}} \\
            \text{t=5:} & s_n & \texttt{0000000100000011011000011100} \\
            & \rd{Q_n} & \rd{\texttt{0000000100000012122100012210}} \\
        \end{array}
        \]
        \caption{Example of the time evolution of BBS-C(2) together with the corresponding sequence of states. We display $s_n$ and $Q_n$ for $n=1,2,\ldots,28$ with $Q_0=q_0$ at each time step. For $Q_n$, we show the index $i$ of the state $q_i$ with $Q_n=q_i$ ($i=0,1,2$)}
        \label{fig:time_evolution_state_C}
\end{figure}

For BBS-S(2), the transition table is instead given in Table~\ref{tab:S2}. In this case, $q_0$ represents the state where the carrier is empty, $q_1$ the state where it has a ball but just skips to the next site, and $q_2$ the state where it has a ball, skips one site ahead, and then drops off the ball if the following site is empty. We show an example of the time evolution and the associated state sequences in Fig.~\ref{fig:time_evolution_state_S}.

\begin{table}[htbp]
    \centering
    \caption{Transition table for BBS-S(2), giving $(\varphi(q,s), \psi(q,s))$}
    \label{tab:S2}
    \begin{tabular}{c|ccc}
        \hline
        \diagbox{$s$}{$q$} & $q_0$ & $q_1$ & $q_2$ \\
        \hline
        $0$ & $(q_0,0)$ & $(q_2,0)$ & $(q_0,1)$ \\
        $1$ & $(q_1,0)$ & $(q_2,1)$ & $(q_1,1)$ \\
        \hline
    \end{tabular}
\end{table}

\begin{figure}[htbp]
        \centering
        \small
        \[
        \text{BBS-S(2)}
        \quad
        \begin{array}{lll}
            \text{t=0:} & s_n & \texttt{0010111000011011000000000000} \\
            & \rd{Q_n} & \rd{\texttt{0012121200012012000000000000}} \\
            \text{t=1:} & s_n & \texttt{0000111010001101100000000000} \\
            & \rd{Q_n} & \rd{\texttt{0000121212001201200000000000}} \\
            \text{t=2:} & s_n & \texttt{0000011010100110110000000000} \\
            & \rd{Q_n} & \rd{\texttt{0000012012120120120000000000}} \\
            \text{t=3:} & s_n & \texttt{0000001100101011011000000000} \\
            & \rd{Q_n} & \rd{\texttt{0000001200121212012000000000}} \\
            \text{t=4:} & s_n & \texttt{0000000110001011101100000000} \\
            & \rd{Q_n} & \rd{\texttt{0000000120001212121200000000}} \\
            \text{t=5:} & s_n & \texttt{0000000011000011101110000000} \\
            & \rd{Q_n} & \rd{\texttt{0000000012000012121212000000}} \\
        \end{array}
        \]
        \caption{Example of the time evolution of BBS-S(2) together with the corresponding sequence of states. We display $s_n$ and $Q_n$ for $n=1,2,\ldots,28$ with $Q_0=q_0$ at each time step. For $Q_n$, we show the index $i$ of the state $q_i$ with $Q_n=q_i$ ($i=0,1,2$)}
        \label{fig:time_evolution_state_S}
\end{figure}

Similarly, the transition table for BBS-V(2) is given in Table~\ref{tab:V2}. In this case, $q_0$ corresponds to the state where the carrier is empty, $q_1$ to the state where it has picked up a ball and is searching for the first empty site, and $q_2$ to the state where it is searching for the second empty site. Figure~\ref{fig:time_evolution_state_V} illustrates a time evolution and the associated state sequences.

\begin{table}[H]
    \centering
    \caption{Transition table for BBS-V(2), giving $(\varphi(q,s), \psi(q,s))$}
    \label{tab:V2}
    \begin{tabular}{c|ccc}
        \hline
        \diagbox{$s$}{$q$} & $q_0$ & $q_1$ & $q_2$ \\
        \hline
        $0$ & $(q_0,0)$ & $(q_2,0)$ & $(q_0,1)$ \\
        $1$ & $(q_1,0)$ & $(q_1,1)$ & $(q_2,1)$ \\
        \hline
    \end{tabular}
\end{table}

\begin{figure}[H]
        \centering
        \small
        \[
        \text{BBS-V(2)}
        \quad
        \begin{array}{lll}
            \text{t=0:} & s_n & \texttt{0010111000011011000000000000} \\
            & \rd{Q_n} & \rd{\texttt{0012222000011222000000000000}} \\
            \text{t=1:} & s_n & \texttt{0000111100001011100000000000} \\
            & \rd{Q_n} & \rd{\texttt{0000111120001222200000000000}} \\
            \text{t=2:} & s_n & \texttt{0000011101000011110000000000} \\
            & \rd{Q_n} & \rd{\texttt{0000011122000011112000000000}} \\
            \text{t=3:} & s_n & \texttt{0000001101100001110100000000} \\
            & \rd{Q_n} & \rd{\texttt{0000001122200001112200000000}} \\
            \text{t=4:} & s_n & \texttt{0000000101110000110110000000} \\
            & \rd{Q_n} & \rd{\texttt{0000000122220000112220000000}} \\
            \text{t=5:} & s_n & \texttt{0000000001111000010111000000} \\
            & \rd{Q_n} & \rd{\texttt{0000000001111200012222000000}} \\
        \end{array}
        \]
        \caption{Example of the time evolution of BBS-V(2) together with the corresponding sequence of states. We display $s_n$ and $Q_n$ for $n=1,2,\ldots,28$ with $Q_0=q_0$ at each time step. For $Q_n$, we show the index $i$ of the state $q_i$ with $Q_n=q_i$ ($i=0,1,2$)}
        \label{fig:time_evolution_state_V}
\end{figure}

From these tables, it is clear that the automata are bijective. 
Indeed, the transition tables for their inverses are given in Tables~\ref{tab:C2-inv}, 
\ref{tab:S2-inv}, and \ref{tab:V2-inv}.

\begin{table}[H]
    \centering
    \caption{Transition table for the inverse of BBS-C(2), giving $(\tilde{\varphi}(q,s),\tilde{\psi}(q,s))$}
    \label{tab:C2-inv}
    \begin{tabular}{c|ccc}
        \hline
        \diagbox{$s$}{$q$} & $q_0$ & $q_1$ & $q_2$ \\
        \hline
        $0$ & $(q_0,0)$ & $(q_0,1)$ & $(q_1,1)$ \\
        $1$ & $(q_1,0)$ & $(q_2,0)$ & $(q_2,1)$ \\
        \hline
    \end{tabular}
\end{table}

\begin{table}[H]
    \centering
    \caption{Transition table for the inverse of BBS-S(2), giving $(\tilde{\varphi}(q,s),\tilde{\psi}(q,s))$}
    \label{tab:S2-inv}
    \begin{tabular}{c|ccc}
        \hline
        \diagbox{$s$}{$q$} & $q_0$ & $q_1$ & $q_2$ \\
        \hline
        $0$ & $(q_0,0)$ & $(q_0,1)$ & $(q_1,0)$ \\
        $1$ & $(q_2,0)$ & $(q_2,1)$ & $(q_1,1)$ \\
        \hline
    \end{tabular}
\end{table}

\begin{table}[H]
    \centering
    \caption{Transition table for the inverse of BBS-V(2), giving $(\tilde{\varphi}(q,s),\tilde{\psi}(q,s))$}
    \label{tab:V2-inv}
    \begin{tabular}{c|ccc}
        \hline
        \diagbox{$s$}{$q$} & $q_0$ & $q_1$ & $q_2$ \\
        \hline
        $0$ & $(q_0,0)$ & $(q_0,1)$ & $(q_1,0)$ \\
        $1$ & $(q_2,0)$ & $(q_1,1)$ & $(q_2,1)$ \\
        \hline
    \end{tabular}
\end{table}

For BBS-C(2), the inverse coincides with the original one, while for BBS-S(2) and BBS-V(2), the inverse agrees with the original after interchanging the roles of $q_1$ and $q_2$. More precisely, in the inverse dynamics of BBS-S(2), $q_1$ represents the state where the carrier has a ball and checks every other site for an empty site at which to drop off the ball, whereas $q_2$ represents the intermediate state in which the carrier passes over a skipped site. In the inverse dynamics of BBS-V(2), $q_1$ represents the state where the carrier is searching for the second empty site, whereas $q_2$ represents the state where it has picked up a ball and is searching for the first empty site. Hence, in all three models, the original configuration can be reconstructed from the configuration after one time step by running the carrier from right to left according to the same pickup and drop-off rule.

\subsection{Well-Definedness on the Configuration Space}

Next, we check that the time evolutions $T_\mathscr{A}$ for BBS-C(2), BBS-S(2), and BBS-V(2) are well-defined on our configuration space $\Omega$ defined in~\eqref{def:Omega}.
\begin{proposition}\label{prop:preserveOmega}
    If $\mathscr{A}$ is the automaton corresponding to \textup{BBS-C(2)}, \textup{BBS-S(2)}, or \textup{BBS-V(2)}, it follows that $\Omega\subset \Omega_{\mathscr{A}}$ and $T_\mathscr{A}(\Omega)\subset \Omega$.
\end{proposition}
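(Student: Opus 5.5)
The plan is to reduce the statement to Proposition~\ref{prop:TOmegainOmega}, which already gives $\Omega\subset\Omega_{\mathscr{A}}$ and $T_{\mathscr{A}}(\Omega)\subset\Omega$ for any automaton satisfying \ref{enum:unique_final_state} and \ref{enum:reset_word}. It therefore suffices to check these two conditions for each of the three transition tables (Tables~\ref{tab:C2}, \ref{tab:S2}, \ref{tab:V2}). Condition \ref{enum:reset_word} also has a clause about the inverse, which we verify with Tables~\ref{tab:C2-inv}, \ref{tab:S2-inv}, \ref{tab:V2-inv}. All three automata are bijective, as already noted.

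For \ref{enum:unique_final_state}, we take $q^*=q_0$ and read off the row $s=0$ of each table.
\begin{itemize}
\item In all three tables $(\varphi(q_0,0),\psi(q_0,0))=(q_0,0)$, so $q_0$ is a fixed point.
\item For BBS-C(2), $(q_1,0)\mapsto(q_0,1)$ and $(q_2,0)\mapsto(q_1,1)$, so neither $q_1$ nor $q_2$ is fixed.
\item For BBS-S(2) and BBS-V(2), $(q_1,0)\mapsto(q_2,0)$ and $(q_2,0)\mapsto(q_0,1)$, so again neither is fixed.
\end{itemize}
Hence $q^*=q_0$ is the unique such state in each model.

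For \ref{enum:reset_word}, we follow the forward map on a run of zeros.
\begin{itemize}
\item In BBS-C(2), $\varphi(\cdot,0)$ acts as $q_2\to q_1\to q_0\to q_0$, so $\varphi(q;00)=q_0$ for every $q$.
\item In BBS-S(2) and BBS-V(2), $\varphi(\cdot,0)$ acts as $q_1\to q_2\to q_0\to q_0$, so again $\varphi(q;00)=q_0$ for every $q$.
\end{itemize}
For the inverse clause we scan $0^N$ from right to left with $\tilde\varphi(\cdot,0)$.
\begin{itemize}
\item In BBS-C(2), Table~\ref{tab:C2-inv} gives $q_2\to q_1\to q_0$.
\item In BBS-S(2) and BBS-V(2), Tables~\ref{tab:S2-inv} and \ref{tab:V2-inv} give $q_2\to q_1\to q_0$, with $q_0$ fixed.
\end{itemize}
Thus $\tilde\varphi(00;q)=q_0$ for every $q$, and $N=2$ works for all three models in both directions.

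With \ref{enum:unique_final_state} and \ref{enum:reset_word} established, Proposition~\ref{prop:TOmegainOmega} yields both inclusions directly. The only step needing care is the inverse clause of \ref{enum:reset_word}. There one must use the inverse tables rather than the forward ones, since in BBS-S(2) and BBS-V(2) the roles of $q_1$ and $q_2$ are interchanged. Everything else is a finite table check, so I expect no real obstacle.
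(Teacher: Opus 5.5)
Your proposal is correct and follows the paper's proof: the paper reads off from Tables~\ref{tab:C2}, \ref{tab:S2}, and \ref{tab:V2} that \ref{enum:bijective}, \ref{enum:unique_final_state}, and \ref{enum:reset_word} hold with $q^*=q_0$ and $N=2$, and then invokes Proposition~\ref{prop:TOmegainOmega}. Your explicit table checks, including the inverse clause of \ref{enum:reset_word} via the inverse tables, are accurate and just fill in the details the paper leaves implicit.
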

\begin{proof}
    From Tables~\ref{tab:C2}, \ref{tab:S2}, and \ref{tab:V2}, the automata of \textup{BBS-C(2)}, \textup{BBS-S(2)}, and \textup{BBS-V(2)} satisfy~\ref{enum:bijective}, \ref{enum:unique_final_state}, and \ref{enum:reset_word} with $q^* = q_0$ and $N = 2$.
    This proposition follows from Proposition~\ref{prop:TOmegainOmega}.
\end{proof}

\begin{remark}\label{rem:Omega_ext}
    For the \textup{BBS-C(2)} automaton $\mathscr{A} =\mathscr{A}_{\textup{{BBS-C(2)}}}$, $\Omega_{\textup{ext}}\subset \Omega_\mathscr{A}$ and $T_{\mathscr{A}}(\Omega_{\textup{ext}})\subset \Omega_{\textup{ext}}$ hold. Here,
    \begin{equation}
        \begin{aligned}
            \Omega_\textup{ext} &= \limsup_{m \to \infty} E_m \cap \limsup_{m \to -\infty} E_m \\
            E_m &= \{ \eta \in \{0, 1\}^{\mathbb{Z}} \mid \eta_m = \eta_{m + 1} = 0 \}.
        \end{aligned}
    \end{equation}
    Although the inclusion $\Omega_{\textup{ext}}\subset \Omega_\mathscr{A}$ also holds for the automata corresponding to $\textup{BBS-S(2)}$ and $\textup{BBS-V(2)}$, the space $\Omega_{\textup{ext}}$ is not invariant under their time evolutions.
\end{remark}
\begin{proof}
See Appendix~\ref{app:Omega_ext}.
\end{proof}

\section{Soliton Dynamics of BBS-C(2), BBS-S(2), and BBS-V(2)}
\label{sec:solitons}

The three models introduced in~\cite{maeno2025solitons} exhibit solitonic behavior. It was shown in~\cite{maeno2025solitons} that every initial configuration admits a decomposition into solitons. In this section, we review the properties of solitons (types, bare velocities, and phase shifts) in these models.

\subsection{Solitons and Their Velocities}

For BBS-C(2), a free soliton of size $k \in \mathbb{Z}_{\geq 2}$ is represented by $``1^k00"$. Here, $1^k$ denotes a block of $k$ consecutive occupied sites. The velocity of such a soliton is $v = 2$. There is also a size-$1$ free soliton $``10"$ or $``01"$, whose velocity is $1$.

We now introduce a soliton decomposition for BBS-C(2). Let $\eta\in \Omega$. We first decompose the configuration into components separated by occurrences of $``00"$. Each component is then of the form
\begin{equation}
    1^{a_1}01^{a_2}0\cdots01^{a_r},\qquad (a_1,a_2,\ldots,a_r\geq 1).
\end{equation}
We apply the following procedure to each component. If $r=1$, then the component is $1^{a_1}$. If $a_1=1$, we regard $10$ as a $1$-soliton. If $a_1\geq 2$, we regard $1^{a_1}00$ as a single soliton. Suppose $r\geq 2$. Consider the initial part
\begin{equation}
    1^{a_1}01^{a_2}
\end{equation}
of the component. If $a_1=1$, then we remove the pair $10$ consisting of the initial $1$ and the following $0$. This removed pair is defined to be a $1$-soliton. The remaining component is 
\begin{equation}
    1^{a_2}01^{a_3}0\cdots 01^{a_r}.
\end{equation}
If $a_1\geq 2$, then we remove the pair $01$ consisting of the $0$ immediately after the first $1$-block and the leftmost $1$ of the second $1$-block. That is, we remove the underlined part in 
\begin{equation}
    1^{a_1}\underline{01}1^{a_2-1}.
\end{equation}
The removed pair is defined to be a $1$-soliton. The remaining component is 
\begin{equation}
    1^{a_1+a_2-1}01^{a_3}0\cdots01^{a_r}.
\end{equation}
We repeat this procedure on the remaining component until it becomes a single block $1^k$. Finally, if $k=1$, we regard $10$ as a $1$-soliton, while if $k\geq 2$, we regard $1^k00$ as one soliton. In Fig.~\ref{fig:soliton_C(2)}, we show examples of these solitons and their time evolutions.

\begin{figure}[H]
        \centering
        \[
        ``10" \
        \begin{array}{ll}
            \text{t=0:} & \texttt{00{\rd\su{10}}0000000} \\
            \text{t=1:} & \texttt{000{\rd\su{10}}000000} \\
            \text{t=2:} & \texttt{0000{\rd\su{10}}00000} \\
        \end{array}
        \] \\
        \[
        ``1^200" \
        \begin{array}{ll}
            \text{t=0:} & \texttt{00{\rd\su{1100}}00000} \\
            \text{t=1:} & \texttt{0000{\rd\su{1100}}000} \\
            \text{t=2:} & \texttt{000000{\rd\su{1100}}0} \\
        \end{array}
        \] \\
        \[
        ``1^300" \
        \begin{array}{ll}
            \text{t=0:} & \texttt{00{\rd\su{11100}}0000} \\
            \text{t=1:} & \texttt{0000{\rd\su{11100}}00} \\
            \text{t=2:} & \texttt{000000{\rd\su{11100}}} \\
        \end{array}
        \]
        \caption{Examples of solitons in BBS-C(2). Solitons are underlined and highlighted in color}
        \label{fig:soliton_C(2)}
\end{figure}

BBS-S(2), on the other hand, admits only two types of free solitons, namely $``11"$ and $``10"$, which correspond to the two types of solitary waves described in~\cite{maeno2025solitons}. In this model, the soliton decomposition can be obtained by viewing a block $1^{2k}$ as $k$ copies of $``11"$, and a block $1^{2k+1}$ as $k$ copies of $``11"$ plus one $``10"$. Their velocities are $v_{11} = 1$ for $``11"$ and $v_{10} = 2$ for $``10"$ as illustrated in Fig.~\ref{fig:soliton_S(2)}.

\begin{figure}[htbp]
        \centering
        \[
        ``11" \
        \begin{array}{ll}
            \text{t=0:} & \texttt{00{\rd\su{11}}0000000} \\
            \text{t=1:} & \texttt{000{\rd\su{11}}000000} \\
            \text{t=2:} & \texttt{0000{\rd\su{11}}00000} \\
        \end{array}
        \] \\
        \[
        ``11"+``11" \
        \begin{array}{ll}
            \text{t=0:} & \texttt{00{\rd\su{11}}{\bl\du{11}}00000} \\
            \text{t=1:} & \texttt{000{\rd\su{11}}{\bl\du{11}}0000} \\
            \text{t=2:} & \texttt{0000{\rd\su{11}}{\bl\du{11}}000} \\
        \end{array}
        \] \\
        \[
        ``10" \
        \begin{array}{ll}
            \text{t=0:} & \texttt{00{\rd\su{10}}0000000} \\
            \text{t=1:} & \texttt{0000{\rd\su{10}}00000} \\
            \text{t=2:} & \texttt{000000{\rd\su{10}}000} \\
        \end{array}
        \] \\
        \[
        ``10"+``10" \
        \begin{array}{ll}
            \text{t=0:} & \texttt{00{\rd\su{10}}{\bl\du{10}}00000} \\
            \text{t=1:} & \texttt{0000{\rd\su{10}}{\bl\du{10}}000} \\
            \text{t=2:} & \texttt{000000{\rd\su{10}}{\bl\du{10}}0} \\
        \end{array}
        \]
        \caption{Examples of solitons in BBS-S(2). Different solitons are distinguished by single and double underlining and by different colors}
        \label{fig:soliton_S(2)}
\end{figure}

For BBS-V(2), we first specify the shape of a soliton. For $m\geq1$, we regard the word $1^m0$ as a phase of an $m$-soliton. In addition, for $m\geq 2$, we also regard the words $1^{m-n}01^{n}0$, where $1\leq n \leq m-1$, as different phases of the same $m$-soliton. Thus, the unique phase of a $1$-soliton is $10$. In all cases, the soliton contains exactly $m$ balls. The reason for including the second type is that under the time evolution of BBS-V(2), an isolated block $1^m0$ does not simply remain a consecutive block of $1$'s (see Fig.~\ref{fig:soliton_V(2)}). Accordingly, the definition of a soliton used here is slightly different from that in~\cite{maeno2025solitons}.

\begin{figure}[H]
        \centering
        \[
        m=1 \
        \begin{array}{ll}
            \text{t=0:} & \texttt{00{\rd\su{10}}0000000} \\
            \text{t=1:} & \texttt{0000{\rd\su{10}}00000} \\
            \text{t=2:} & \texttt{000000{\rd\su{10}}000} \\
        \end{array}
        \] \\
        \[
        m=2 \
        \begin{array}{ll}
            \text{t=0:} & \texttt{00{\rd\su{110}}000000} \\
            \text{t=1:} & \texttt{000{\rd\su{1010}}0000} \\
            \text{t=2:} & \texttt{00000{\rd\su{110}}000} \\
        \end{array}
        \] \\
        \[
        m=3 \
        \begin{array}{ll}
            \text{t=0:} & \texttt{00{\rd\su{1110}}00000} \\
            \text{t=1:} & \texttt{000{\rd\su{11010}}000} \\
            \text{t=2:} & \texttt{0000{\rd\su{10110}}00} \\
            \text{t=3:} & \texttt{000000{\rd\su{1110}}0} \\
        \end{array}
        \]
        \caption{Examples of solitons of size $m$ in BBS-V(2). Notice the periodic evolution of the soliton shape}
        \label{fig:soliton_V(2)}
\end{figure}

We now introduce a soliton decomposition for BBS-V(2). The following algorithm provides a decomposition of a finite configuration into solitons at each time step. Let $\eta\in\Omega$. We first decompose $\eta$ into maximal non-vacuum components separated by the reset word $``00"$. Since two groups of balls separated by $``00"$ do not interact during one time step, the soliton decomposition is defined independently for each such component. Consider one component of the form
\begin{equation}
    1^{a_1}01^{a_2}0\cdots01^{a_r}
\end{equation}
with $a_1,a_2,\ldots,a_r\geq 1$. We associate with this component the block-length sequence
\begin{equation}
    (a_1,a_2,\ldots,a_r).
\end{equation}
The soliton decomposition of this component is defined by the following recursive procedure. 

If $r=1$, then $1^{a_1}0$ is the unique soliton contained in this component. Otherwise, choose a maximal element $a_i$ of the sequence $(a_1,a_2,\ldots,a_r)$. If there is more than one maximal element, we choose the leftmost one. If the chosen $a_i$ has two neighbors (i.e., $1<i<r$), we pair $a_i$ with the larger of $a_{i-1}$ and $a_{i+1}$. If the two neighboring blocks have the same length, we choose the left neighbor. If $i=r$ or $i=1$, we pair it with the unique neighbor. The sum of the two paired numbers is recorded as the size of one soliton. The two paired blocks are then removed from the sequence, and the same procedure is applied to the remaining sequence. If a single block remains at the end, it is recorded as one soliton.
\begin{example}
    Consider a configuration 
    \begin{equation}
        0^21^101^201^101^301^201^10^2.
    \end{equation}
    The associated block-length sequence is $(1,2,1,3,2,1)$. The leftmost maximal block is $1^3$. Since its two neighboring blocks have lengths $1$ and $2$, we pair $1^3$ and $1^2$. This gives a soliton of size $3+2=5$. After removing these two blocks, the remaining block-length sequence is $(1,2,1,1)$. The leftmost maximal block is now $1^2$. Since the two neighbors have the same length $1$, we pair $1^2$ with its left neighbor $1^1$, obtaining a soliton of size $2+1=3$. Finally, the remaining two blocks are paired and give a soliton of size $1+1=2$. Therefore, the soliton content of this component is $\{5,3,2\}$. The soliton decomposition of this configuration and its time evolution are shown in Fig.~\ref{fig:decomp_alg} (see at $t=0$ for example).
\end{example}

\begin{figure}
    \begin{tabular}{ll}
    t=-2:  & \texttt{{\gr\su{10}}{\bl\du{10110}}{\gr\su{10}}{\rd\tu{1110110}}000000000000000000000} \\
    t=-1:  & \texttt{00{\bl\du{11010}}{\gr\su{110}}{\rd\tu{1101110}}00000000000000000000} \\
    t=0:  & \texttt{000{\bl\du{10110}}{\gr\su{10}}{\rd\tu{1110110}}{\gr\su{10}}000000000000000000} \\
    t=1:  & \texttt{00000{\bl\du{1110}}0{\rd\tu{1111010}}{\gr\su{110}}00000000000000000} \\
    t=2:  & \texttt{000000{\bl\du{11010}}{\rd\tu{1110110}}{\gr\su{1010}}000000000000000} \\
    t=3:  & \texttt{0000000{\bl\du{10110}}{\rd\tu{1101110}}0{\gr\su{110}}00000000000000} \\
    t=4:  & \texttt{000000000{\bl\du{1110}}{\rd\tu{1011110}}0{\gr\su{1010}}000000000000} \\
    t=5:  & \texttt{0000000000{\bl\du{110}}{\rd\tu{1101110}}{\bl\du{10}}0{\gr\su{110}}00000000000} \\
    t=6:  & \texttt{00000000000{\bl\du{10}}{\rd\tu{1110110}}{\bl\du{110}}0{\gr\su{1010}}000000000} \\
    t=7: & \texttt{0000000000000{\rd\tu{1111010}}{\bl\du{1110}}00{\gr\su{110}}00000000} \\
    t=8: & \texttt{00000000000000{\rd\tu{1110110}}{\bl\du{11010}}0{\gr\su{1010}}000000} \\
    t=9: & \texttt{000000000000000{\rd\tu{1101110}}{\bl\du{10110}}00{\gr\su{110}}00000} \\
    t=10: & \texttt{0000000000000000{\rd\tu{1011110}}0{\bl\du{1110}}00{\gr\su{1010}}000} \\
    t=11: & \texttt{000000000000000000{\rd\tu{111110}}0{\bl\du{11010}}00{\gr\su{110}}00} \\
    t=12: & \texttt{0000000000000000000{\rd\tu{1111010}}{\bl\du{10110}}00{\gr\su{1010}}}
    \end{tabular}
    \caption{Time evolution of the configuration \texttt{101011010111011} under BBS-V(2). The soliton decomposition given by the peak-pairing algorithm is indicated by single (green), double (blue), and triple (red) underlining corresponding to the $2$-, $3$-, and $5$-solitons, respectively. When a soliton is isolated, its shape is of the form $1^m0$ or $1^{m-n}01^n0$. The right ends of the configurations are aligned by padding with zeros}
    \label{fig:decomp_alg}
\end{figure}

We next define the position of each soliton. The algorithm gives not only the size of each soliton but also the set of balls belonging to it. More precisely, if two blocks are paired in the algorithm, then all balls contained in these two blocks are assigned to the same soliton. If one block remains unpaired at the end, then the balls in this block form one soliton. Fix $\eta\in \Omega$, and consider a soliton $\gamma$ in $\eta$ obtained by the algorithm. We write $m_\gamma$ for its size, and denote by $P_\gamma\subset \mathbb{Z}$ the set of positions of the balls belonging to $\gamma$. By definition, $|P_\gamma|=m_\gamma$. We define the position of the soliton $\gamma$ as the average of the positions of the balls in $P_\gamma$, i.e., its center of mass:
\begin{equation}
    X_\gamma(\eta) = \frac{1}{m_\gamma}\sum_{x\in P_\gamma}x.
\end{equation}
Although the shape of the soliton changes with time, the average position of its balls moves linearly. More precisely, for an isolated $m$-soliton, its position moves to the right by $(m+1)/m$ at each time step (Fig.~\ref{fig:soliton_V(2)}). Thus the position defined above yields the natural velocity 
\begin{equation}
    v_m = \frac{m+1}{m}.
\end{equation}

\subsection{Phase Shifts}

When a faster soliton is initially located behind a slower one, it eventually overtakes the slower soliton. Note that solitons always propagate to the right, i.e., $v > 0$. However, interactions between the solitons during the overtaking process shift their positions compared with the non-interacting evolution. This displacement is called a phase shift. In the GHD of soliton systems, the phase shifts together with the soliton velocities in the absence of interactions, often called the bare velocities, determine the macroscopic behavior of the system.

In this section, we determine the phase shifts for the three models. Note that when a soliton is already ahead of another soliton with a smaller or equal velocity, then the two solitons never interact afterwards. We denote by $\Phi(J, K)$ the phase shift of a soliton $J$ resulting from its interaction with another soliton $K$. For example, in the original BBS (Example~\ref{ex:TakahashiSatsuma}), the phase shift is given by
\begin{equation}
    \Phi(1^k0^k,1^l0^l)
    =\begin{cases}
        2(k\wedge l), &k>l \\
        -2(k\wedge l), &k<l,
    \end{cases}
    \label{eq:BBS_phase_shift}
\end{equation}
while solitons with the same velocity do not overtake each other.

For BBS-C(2), we have only two possible velocities: $v = 1$ for $``10"$ and $v=2$ for $``1^k00"$ with $k\geq2$. It is known~\cite{DaisukeTakahashi_1997} that
\begin{align}
    \Phi(1^k00, 10) = -\Phi(10, 1^k00) = 2
\end{align}
for $k \geq 2$. The phase shift is antisymmetric, $\Phi(J, K) = -\Phi(K, J)$, as in the original BBS~\eqref{eq:BBS_phase_shift}. Examples of interactions are shown in Fig.~\ref{fig:phase_C(2)}.

\begin{figure}[H]
        \centering
        \[
        ``1^200"+``10" \
        \begin{array}{ll}
            \text{t=0:} & \texttt{00{\rd\du{1100}}{\bl\su{10}}000000} \\
            \text{t=1:} & \texttt{0000{\rd\du{11}}{\bl\su{01}}{\rd\du{00}}0000} \\
            \text{t=2:} & \texttt{000000{\bl\su{10}}{\rd\du{1100}}00} \\
            \text{t=3:} & \texttt{0000000{\bl\su{10}}0{\rd\du{1100}}} \\
        \end{array}
        \] \\
        \[
        ``1^300"+``10" \
        \begin{array}{ll}
            \text{t=0:} & \texttt{00{\rd\du{11100}}{\bl\su{10}}00000000}\\
            \text{t=1:} & \texttt{0000{\rd\du{111}}{\bl\su{01}}{\rd\du{00}}000000} \\
            \text{t=2:} & \texttt{000000{\rd\du{11}}{\bl\su{01}}{\rd\du{100}}0000} \\
            \text{t=3:} & \texttt{00000000{\bl\su{10}}{\rd\du{11100}}00} \\
            \text{t=4:} & \texttt{000000000{\bl\su{10}}0{\rd\du{11100}}} \\
        \end{array}
        \]
        \caption{Examples of soliton interactions in BBS-C(2). Different solitons are distinguished by single and double underlining and by different colors}
        \label{fig:phase_C(2)}
\end{figure}

For BBS-S(2), we instead have
\begin{align}
    \Phi(11,10)=-1, \quad \Phi(10,11)=2,
\end{align}
as illustrated in Fig.~\ref{fig:phase_S(2)}. In this case, the phase shift is not antisymmetric with respect to the two arguments of $\Phi(\cdot, \cdot)$, namely
\begin{align}
    \Phi(J,K)\neq-\Phi(K,J).
\end{align}

\begin{figure}[htbp]
        \centering
        \[
        ``10"+``11" \
        \begin{array}{ll}
            \text{t=0:} & \texttt{00{\bl\su{10}}{\rd\du{11}}000} \\
            \text{t=1:} & \texttt{0000{\rd\du{11}}{\bl\su{10}}0} \\
        \end{array}
        \] \\
        \[
        ``10"+``10"+``11" \
        \begin{array}{ll}
            \text{t=0:} & \texttt{00{\gr\su{10}}{\bl\du{10}}{\rd\tu{11}}00000} \\
            \text{t=1:} & \texttt{0000{\gr\su{10}}{\rd\tu{11}}{\bl\du{10}}000} \\
            \text{t=2:} & \texttt{000000{\rd\tu{11}}{\gr\su{10}}{\bl\du{10}}0} \\
        \end{array}
        \] \\
        \[
        ``10"+``11"+``11" \
        \begin{array}{ll}
            \text{t=0:} & \texttt{00{\gr\su{10}}{\bl\du{11}}{\rd\tu{11}}000} \\
            \text{t=1:} & \texttt{0000{\bl\du{11}}{\rd\tu{11}}{\gr\su{10}}0} \\
        \end{array}
        \]
        \caption{Examples of soliton interactions in BBS-S(2). In the second example, a $``11"$-soliton interacts with two $``10"$-solitons. In the third example, two $``11"$-solitons interact with a $``10"$-soliton}
        \label{fig:phase_S(2)}
\end{figure}

Finally, for BBS-V(2), let $\gamma_m$ be an $m$-soliton, that is, a soliton consisting of $m$ balls of the form $1^m0$ or $1^{m-n}01^n0$ for some $1\leq n\leq m-1$. The phase shifts are obtained as 
\begin{align}
    \Phi(\gamma_l,\gamma_m) &= -m \left(1+\frac{2}{l}\right) \\
    \Phi(\gamma_m,\gamma_l) &= l+2
\end{align}
for $l>m\geq1$, as illustrated in Fig.~\ref{fig:phase_V(2)}. As in BBS-S(2), the phase shifts are not antisymmetric.

\begin{figure}[H]
        \centering
        \[
        (l,m)=(2,1) \
        \begin{array}{ll}
            \text{t=0:} & \texttt{00{\bl\su{10}}0{\rd\du{110}}00000} \\
            \text{t=1:} & \texttt{0000{\rd\du{1010}}{\bl\su{10}}000} \\
            \text{t=2:} & \texttt{000000{\rd\du{110}}0{\bl\su{10}}0} \\
        \end{array}
        \] \\
        \[
        (l,m)=(3,1) \
        \begin{array}{ll}
            \text{t=0:} & \texttt{00{\bl\su{10}}0{\rd\du{1110}}00000} \\
            \text{t=1:} & \texttt{0000{\rd\du{10110}}{\bl\su{10}}000} \\
            \text{t=2:} & \texttt{000000{\rd\du{1110}}0{\bl\su{10}}0} \\
        \end{array}
        \] \\
        \[
        (l,m)=(3,2) \
        \begin{array}{ll}
            \text{t=0:} & \texttt{00{\bl\su{110}}0{\rd\du{1110}}0000000} \\
            \text{t=1:} & \texttt{000{\bl\su{10}}{\rd\du{10110}}{\bl\su{10}}00000} \\
            \text{t=2:} & \texttt{00000{\rd\du{11010}}{\bl\su{110}}0000} \\
            \text{t=3:} & \texttt{000000{\rd\du{10110}}{\bl\su{1010}}00} \\
            \text{t=4:} & \texttt{00000000{\rd\du{1110}}0{\bl\su{110}}0} \\
        \end{array}
        \]
        \caption{Examples of interactions between an $l$-soliton (red, double-underlined) and an $m$-soliton (blue, single-underlined) in BBS-V(2)}
        \label{fig:phase_V(2)}
\end{figure}

The soliton velocities and phase shifts in the three models are summarized in Table~\ref{tab:prop}.

\begin{table}[htbp]
    \caption{Soliton types, velocities, and phase shifts in the three models}
    \label{tab:prop}
    \centering
  \begin{tabular}{c c c c c c}
    \toprule
     & \multicolumn{2}{c}{BBS-C(2)} & \multicolumn{2}{c}{BBS-S(2)} & BBS-V(2) \\
    \cmidrule(lr){2-3} \cmidrule(lr){4-5} \cmidrule(lr){6-6}
    Soliton & $1^k 00$ &$10$ or $01$ & $11$ & $10$ & $1^{m-n} 01^{n}0 \ (1\leq n \leq m-1)$ or $1^m0$ \\
    Bare velocity & $2$ &$1$ & $1$ & $2$ & $\frac{m+1}{m}$ \\
    \multirow{3}{*}{Phase shift} & \multicolumn{2}{c}{$\Phi(1^k00,10)=2,$} & \multicolumn{2}{c}{$\Phi(11,10)=-1,$} & $\Phi(\gamma_l, \gamma_m)=-m(1+\frac{2}{l}),$ \\
     & \multicolumn{2}{c}{$\Phi(10,1^k00)=-2$} & \multicolumn{2}{c}{$\Phi(10,11)=2$} & $\Phi(\gamma_m,\gamma_l)=l+2$ \\
     & \multicolumn{2}{c}{$(k\geq 2)$} & \multicolumn{2}{c}{} & $(l>m\geq1)$ \\
    \bottomrule
  \end{tabular}
\end{table}

\section{Invariant Measures for BBS-C(2), BBS-S(2), and BBS-V(2)}
\label{sec:measures}

In this section, we prove that Bernoulli product measures are invariant for BBS-S(2) and BBS-V(2), while two-sided space-homogeneous Markov distributions are invariant for BBS-C(2).

\subsection{Invariant Measures for BBS-C(2)}

\begin{theorem}\label{thm:inv_meas_C2}
    For any $p,r\in(0,1)$, a stationary Markov distribution $\nu_M$ is an invariant measure for the system \textup{BBS-C(2)}.
\end{theorem}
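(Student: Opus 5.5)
We apply Theorem~\ref{thm:inv_meas}. By Proposition~\ref{prop:preserveOmega}, BBS-C(2) satisfies \ref{enum:bijective}, \ref{enum:unique_final_state} and \ref{enum:reset_word} with $q^*=q_0$ and $N=2$. So it suffices to verify \eqref{eq:equiv_condition} with $N=2$, for every $j\in\{0,1\}$, $n\ge1$ and $\mathbf{s}\in\{0,1\}^n$. For a word $\mathbf{x}=x_1\cdots x_L$ with $x_1=0$, the conditional Markov weight is
\begin{equation}
\nu_M(\eta_1\cdots\eta_L=\mathbf{x}\mid\eta_1=0)=p^{n_{00}}(1-p)^{n_{01}}(1-r)^{n_{10}}r^{n_{11}},
\end{equation}
where $n_{ab}$ counts the occurrences of $ab$ in $\mathbf{x}$. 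Since $x_1=0$, these four counts are determined by four quantities: the length $L$, the number of ones $|\mathbf{x}|_1$, the number of $1$-blocks $b(\mathbf{x})$, and the last letter $x_L$. Explicitly, $n_{01}=b$, $n_{10}=b-x_L$, $n_{11}=|\mathbf{x}|_1-b$ and $n_{00}=L-1-n_{01}-n_{10}-n_{11}$. The plan is therefore to control these quantities for the preimages $\tilde{\psi}(0^{3}\mathbf{s}j;q)$ as $q$ ranges over $\{q_0,q_1,q_2\}$.

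\emph{Step 1 (conservation laws).} Particle preservation with $w(q_i)=i$, $w(s)=s$ gives $|\tilde{\psi}(0^3\mathbf{s}j;q)|_1=|\mathbf{s}j|_1+w(q)$; this is the computation used in the proof of Theorem~\ref{thm:Ber_inv_meas}. The new ingredient is a second conservation law for the number of $1$-blocks. I expect that, for the inverse table (Table~\ref{tab:C2-inv}, which equals Table~\ref{tab:C2}), the number of $10$-patterns (equivalently, the number of solitons) is conserved. To make this precise, I will enlarge the state to a pair (carrier state, neighbouring letter) and find a weight $u$ on such pairs with $u(\text{state})+[\text{pattern created}]=u(\text{new state})+[\text{pattern consumed}]$. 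This is checked on the $3\times2\times2$ cases of the table. The outcome should be $b(\tilde{\psi}(0^3\mathbf{s}j;q))=B(\mathbf{s},j)+\beta_j(q)$, and also that the last letter $\tilde{\psi}(q,j)$ of the preimage is a function of $(q,j)$ only. Here $B(\mathbf{s},j)$ is independent of $q$, and $\beta_j(q)$ is independent of $\mathbf{s}$.

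\emph{Step 2 (factorization).} Combining Step 1 with the display above gives
\begin{equation}
\nu_M\bigl(\eta_1\cdots\eta_{n+4}=\tilde{\psi}(0^3\mathbf{s}j;q)\mid\eta_1=0\bigr)=C(\mathbf{s},j)\,c_j(q),
\end{equation}
with $c_j(q)$ an explicit monomial in $p,1-p,r,1-r$. For the numerator of \eqref{eq:equiv_condition}, scanning $0^3\mathbf{s}j^2$ from the right starting at $q$ first processes the final $j$. This produces the letter $x=\tilde{\psi}(q,j)$ and the state $q'=\tilde{\varphi}(q,j)$, after which the remaining preimage is $\tilde{\psi}(0^3\mathbf{s}j;q')$. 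Hence the numerator equals $C(\mathbf{s},j)\sum_q c_j(q')\,P_{y(q'),x}$, where $y(q')=\tilde{\psi}(q',j)$ is the last letter of that preimage and $P_{ab}$ denotes the corresponding entry of the transition matrix $P$. The factor $C(\mathbf{s},j)$ cancels, so condition \eqref{eq:equiv_condition} reduces to the two finite identities
\begin{equation}
\sum_{q}c_j(\tilde{\varphi}(q,j))\,P_{\tilde{\psi}(\tilde{\varphi}(q,j),j),\,\tilde{\psi}(q,j)}=\pi_j\sum_q c_j(q),\qquad \pi_0=p,\ \pi_1=r.
\end{equation}

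\emph{Step 3 (finite check).} For $j=0$ the states map as $q_0\to q_0$, $q_1\to q_0$, $q_2\to q_1$. For $j=1$ they map as $q_0\to q_1$, $q_1\to q_2$, $q_2\to q_2$. The two identities are therefore polynomial identities in $p,r$ over three terms each, and I would check them directly.

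The main obstacle is Step 1: finding the right extended weight so that the block count, and not merely the particle count, is conserved up to a boundary term depending only on $q$. If no such clean law exists, I would fall back on induction on $n$. For each $q$, I would track the triple (ones, blocks, last letter) of the preimage as letters are prepended on the left of $\mathbf{s}$, and show that the differences between the three values of $q$ stabilise to functions of $(q,j)$ alone.
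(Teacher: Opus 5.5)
Your proposal is correct. It reaches the key step by a different mechanism than the paper.

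\textbf{The paper's route.} It applies Lemma~\ref{lem:000} directly to the events $F_k=\{Q_n=q_k,\ \tilde\eta_1\cdots\tilde\eta_n=0^3\tilde{\mathbf s}j\}$. It shows that $\nu_M(F_0):\nu_M(F_1):\nu_M(F_2)$ is independent of $\tilde{\mathbf s}$ by a three-case analysis. The cases are set by the last two letters of $\tilde\eta$, and by whether the leftward carrier first meets $00$ or $11$ (these reset the state to $q_0$, resp.\ $q_2$). It then reads off from tables that the three preimages share a common middle and differ only near the ends.

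\textbf{Your route.} You derive this $\mathbf s$-independence from two exact conservation laws. These are matched to the statistics $(|\mathbf x|_1,b(\mathbf x),x_L)$ that determine the Markov weight, which extends the one-conserved-quantity argument of Theorem~\ref{thm:Ber_inv_meas}.

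The block law you only anticipate in Step~1 does hold. Let $N_{01}$ count occurrences of $01$. For any words related by the BBS-C(2) automaton, with arbitrary boundary states,
\[
N_{01}(\eta_1\cdots\eta_L)-N_{01}(\tilde\eta_1\cdots\tilde\eta_L)=\eta_L(1-\tilde\eta_L)-\eta_1(1-\tilde\eta_1).
\]
That is, the boundary weight at site $k$ is $u_k=\eta_k(1-\tilde\eta_k)$, which by bijectivity is a function of the carrier state and one letter. All twelve local cases check.

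In your setting $\eta_1=\tilde\eta_1=0$, $\tilde\eta_L=j$ and $\eta_L=\tilde\psi(q,j)$. This gives $b(\mathbf x)=b(\mathbf sj)+(1-j)\tilde\psi(q,j)$, so $\beta_0=(0,1,1)$ and $\beta_1=(0,0,0)$ on $(q_0,q_1,q_2)$. Hence $c_0\propto(p^2,p(1-p),r(1-p))$ and $c_1\propto(p(1-r),r(1-r),r^2)$, which are exactly the ratios in the paper's tables. Your Step~3 identities then read
\[
p^3+p^2(1-p)+rp(1-p)=p\,[p^2+p(1-p)+r(1-p)]
\]
and
\[
pr(1-r)+r^2(1-r)+r^3=r\,[p(1-r)+r(1-r)+r^2],
\]
and both are true. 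You should also note explicitly why $x_1=0$: by \ref{enum:reset_word} the inverse scan is in $q_0$ before the leftmost $0$, and $\tilde\psi(q_0,0)=0$.

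\textbf{What each approach buys.} Your route has no case distinctions. It turns the paper's asserted ``common intermediate structure'' into a verified telescoping identity. It also makes visible why BBS-C(2) is special among the three models. BBS-C(2) conserves the number of $1$-blocks (the number of solitons) as well as the particle number, i.e.\ both conserved statistics on which the Markov weight depends. BBS-S(2) and BBS-V(2) do not conserve the block number, as consecutive rows in Fig.~\ref{fig:time_evolution} show. The paper's route is more concrete: it exhibits the preimage patterns explicitly and does not require finding a second conserved quantity.
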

\begin{proof}
    Let $\tilde{\eta}$ denote the configuration obtained from $\eta$ after one time step of evolution. Note that this time-evolution map is shift-invariant. By Lemma~\ref{lem:000}, to prove this theorem, we only need to show that for any integer $n\geq 5$ and for any sequence $\tilde{\mathbf{s}} = (\tilde{s}_4,\ldots, \tilde{s}_{n-1})\in\{0,1\}^{n-4}$, the following equations hold:
    \begin{align}
        \nu_M(\tilde{\eta}_{n+1} = 0 \mid \tilde{\eta}_1\ldots\tilde{\eta}_n=0^3\tilde{\mathbf{s}}0) &= p \label{eq:C2Markovp}\\
        \nu_M(\tilde{\eta}_{n+1} = 1 \mid \tilde{\eta}_1\ldots\tilde{\eta}_n = 0^3\tilde{\mathbf{s}}1) &= r. \label{eq:C2Markovr}
    \end{align}
    
    We first prove~\eqref{eq:C2Markovp}. Let $F_k$ denote the event $\{Q_n = q_k, \tilde{\eta}_1\ldots\tilde{\eta}_n = 0^3\tilde{\mathbf{s}}0\}$ for $k=0, 1, 2$. Using the transition rules of BBS-C(2) and Lemma~\ref{lem:000}, it is sufficient to prove
    \begin{equation}\label{targetC(2)_00}
        \frac{\nu_M(F_0)+r \nu_M(F_1)}{\nu_M(F_0)+\nu_M(F_1)+\nu_M(F_2)} = p.
    \end{equation}
    The probability of each event, $\nu_M(F_k)$, can be computed by reconstructing the preimage configuration $\eta_1\ldots \eta_n$ where $\eta = T_{\text{BBS-C(2)}}^{-1}(\tilde{\eta})$ together with its initial state $Q_0$. When considering the ratio in \eqref{targetC(2)_00}, any part of $\eta$ that is identical regardless of the state $Q_n = q_k$ corresponds to a common factor in the probabilities, which will cancel out. Therefore, we only need to analyze the parts of $\eta$ that differ. Note that for BBS-C(2), recovering $Q_{n-1}$ and $\eta_n$ from $Q_n$ and $\tilde{\eta}_n$ is equivalent to moving the carrier from right to left.
    
    The possible preimage configurations $\eta$ fall into three cases. The first case is determined by the values of $\tilde{\eta}_{n-1}$ and $\tilde{\eta}_n$ at the end of the configuration. The remaining two cases arise from the following observation. Starting from any state $Q_n\in\{q_0,q_1,q_2\}$, the states are updated as follows: when the carrier passes $``00"$ in $\tilde{\eta}$ from right to left, the state becomes $q_0$ regardless of the initial state $Q_n$. Similarly, when it passes $``11"$, the state becomes $q_2$ regardless of $Q_n$.

    \emph{Case 1:} The output history ends with $\tilde{\eta}_{n-1} \tilde{\eta}_n = 0^2$.\\
    In this case, the tails, i.e., $(\eta_{n-2}, \eta_{n-1}, \eta_n)$, of the three configurations corresponding to $Q_n=q_0,q_1,q_2$ are shown in Table~\ref{tab:0case1}.

    \begin{table}[htbp]
    \caption{Transition probabilities and patterns (Case 1)}\label{tab:0case1}
    \centering
    \begin{tabular}{rcc}
        \toprule
        $Q_n$ & $\eta$ & Probability/Common factor \\
        \midrule
        $q_0$ & $\cdots\texttt{000}$ & $p^2$ \\
        $q_1$ & $\cdots\texttt{001}$ & $p(1-p)$ \\
        $q_2$ & $\cdots\texttt{011}$ & $r(1-p)$ \\
        \bottomrule
    \end{tabular}
    \end{table}

    \emph{Case 2:} $\tilde{\eta}$ ends with $\tilde{\eta}_{n-1}\tilde{\eta}_n=10$, and when the carrier runs from right to left, it encounters $``00"$ before $``11"$ in $\tilde{\eta}$. \\
    In this case, the three configurations share a common intermediate structure, differing only at the boundaries (Table~\ref{tab:0case2}).

    \begin{table}[htbp]
    \caption{Transition probabilities and patterns (Case 2)}\label{tab:0case2}
    \centering
    \begin{tabular}{rcc}
        \toprule
        $Q_n$ & $\eta$ & Probability/Common factor \\
        \midrule
        $q_0$ & $\texttt{001}\cdots\texttt{00}$ & $p^2(1-p)$ \\
        $q_1$ & $\texttt{001}\cdots\texttt{01}$ & $p(1-p)^2$ \\
        $q_2$ & $\texttt{011}\cdots\texttt{01}$ & $r(1-p)^2$ \\
        \bottomrule
    \end{tabular}
    \end{table}
    \emph{Case 3:} $\tilde{\eta}$ ends with $\tilde{\eta}_{n-1}\tilde{\eta}_n=10$, and when the carrier runs from right to left, it encounters $``11"$ before $``00"$ in $\tilde{\eta}$.\\
    Similarly to Case $2$, the three configurations share a common structure in the middle (Table~\ref{tab:0case3}).

    \begin{table}[htbp]
    \caption{Transition probabilities and patterns (Case 3)}\label{tab:0case3}
    \centering
    \begin{tabular}{rcc}
        \toprule
        $Q_n$ & $\eta$ & Probability/Common factor \\
        \midrule
        $q_0$ & $\mathtt{100}\cdots\mathtt{00}$ & $p^2(1-r)$ \\
        $q_1$ & $\mathtt{100}\cdots\mathtt{01}$ & $p(1-p)(1-r)$ \\
        $q_2$ & $\mathtt{110}\cdots\mathtt{01}$ & $(1-p)r(1-r)$ \\
        \bottomrule
    \end{tabular}
    \end{table}

    A key observation is that the ratio of the probabilities $\nu_M(F_0) : \nu_M(F_1) : \nu_M(F_2)$ is given by $p^2 : p(1-p) : r(1-p)$ in all three cases after factoring out a common term. Substituting this constant ratio into~\eqref{targetC(2)_00} confirms that the equality holds. This completes the proof for the $0\to0$ transition \eqref{eq:C2Markovp}.

    We next prove \eqref{eq:C2Markovr}. Similarly, let $F_k'$ denote the event $\{Q_n = q_k, \tilde{\eta}_1\ldots\tilde{\eta}_n=0^3\tilde{\mathbf{s}}1\}$ for $k=0, 1, 2$. It suffices to show that
    \begin{equation}\label{targetC(2)_11}
        \frac{p\nu_M(F'_1)+\nu_M(F'_2)}{\nu_M(F'_0)+\nu_M(F'_1)+\nu_M(F'_2)} = r.
    \end{equation}
    
    \emph{Case 1:} $\tilde{\eta}$ ends with $\tilde{\eta}_{n-1} \tilde{\eta}_n=1^2$ (Table~\ref{tab:1case1}).

    \begin{table}[H]
    \caption{Transition probabilities and patterns (Case 1)}\label{tab:1case1}
    \centering
    \begin{tabular}{rcc}
        \toprule
        $Q_n$ & $\eta$ & Probability/Common factor \\
        \midrule
        $q_0$ & $\cdots\texttt{100}$ & $p(1-r)$ \\
        $q_1$ & $\cdots\texttt{110}$ & $r(1-r)$ \\
        $q_2$ & $\cdots\texttt{111}$ & $r^2$ \\
        \bottomrule
    \end{tabular}
    \end{table}
    
    \emph{Case 2:} $\tilde{\eta}$ ends with $\tilde{\eta}_{n-1} \tilde{\eta}_n=01$ and when the carrier runs from right to left, it encounters $``00"$ before $``11"$ in $\tilde{\eta}$ (Table~\ref{tab:1case2}).

    \begin{table}[htbp]
    \caption{Transition probabilities and patterns (Case 2)}\label{tab:1case2}
    \centering
    \begin{tabular}{rcc}
        \toprule
        $Q_n$ & $\eta$ & Probability/Common factor \\
        \midrule
        $q_0$ & $\texttt{001}\cdots\texttt{10}$ & $p(1-p)(1-r)$ \\
        $q_1$ & $\texttt{011}\cdots\texttt{10}$ & $(1-p)r(1-r)$ \\
        $q_2$ & $\texttt{011}\cdots\texttt{11}$ & $(1-p)r^2$ \\
        \bottomrule
    \end{tabular}
    \end{table}
    
    \emph{Case 3:} $\tilde{\eta}$ ends in $\tilde{\eta}_{n-1} \tilde{\eta}_n=01$ and when the carrier runs from right to left, it encounters $``11"$ before $``00"$ in $\tilde{\eta}$ (Table~\ref{tab:1case3}).

    \begin{table}[htbp]
    \caption{Transition probabilities and patterns (Case 3)}\label{tab:1case3}
    \centering
    \begin{tabular}{rcc}
        \toprule
        $Q_n$ & $\eta$ & Probability/Common factor \\
        \midrule
        $q_0$ & $\texttt{100}\cdots\texttt{10}$ & $p(1-r)^2$ \\
        $q_1$ & $\texttt{110}\cdots\texttt{10}$ & $r(1-r)^2$ \\
        $q_2$ & $\texttt{110}\cdots\texttt{11}$ & $r^2(1-r)$ \\
        \bottomrule
    \end{tabular}
    \end{table}
    
    In all cases, $\nu_M(F'_0) : \nu_M(F'_1) : \nu_M(F'_2) = p(1-r) : r(1-r) : r^2$. Substituting this constant ratio into~\eqref{targetC(2)_11} confirms the equality, which implies~\eqref{eq:C2Markovr}.
\end{proof}

\subsection{Invariant Measures for BBS-S(2) and BBS-V(2)}

\begin{theorem}\label{inv_meas_S2V2}
    For $p,r \in (0,1)$, $\nu_M$ is an invariant measure for the systems \textup{BBS-S(2)} and \textup{BBS-V(2)} if and only if $r= 1-p$. This condition corresponds to the case where $\nu_M$ is the Bernoulli product measure $\nu_p$.
\end{theorem}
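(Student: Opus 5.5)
The "if" direction is immediate. BBS-S(2) and BBS-V(2) are particle-preserving: with $w(q_0)=0$, $w(q_1)=w(q_2)=1$, $w(s)=s$, the conservation law~\eqref{eq:conserv} can be checked entry by entry in Tables~\ref{tab:S2} and~\ref{tab:V2}. By Proposition~\ref{prop:preserveOmega} (i.e.\ its proof), both automata satisfy \ref{enum:bijective}, \ref{enum:unique_final_state} and \ref{enum:reset_word} with $q^*=q_0$ and $N=2$. Theorem~\ref{thm:Ber_inv_meas} then shows that $\nu_p$ is invariant. Since $\nu_M=\nu_p$ exactly when $r=1-p$, this settles one implication.

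For the "only if" direction I would use the necessity part of Theorem~\ref{thm:inv_meas}. If $\nu_M$ is invariant, then \eqref{eq:equiv_condition} must hold for every $n\ge1$, every $\mathbf{s}\in\{0,1\}^n$ and each $j$. It therefore suffices to exhibit one short word $\mathbf{s}$ and one $j$ for which the left-hand side of \eqref{eq:equiv_condition} is computable and differs from $p$ (or from $r$) unless $r=1-p$.

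To compute it, I would run the inverse automata (Tables~\ref{tab:S2-inv} and~\ref{tab:V2-inv}) from right to left on $0^{3}\mathbf{s}j^2$ and on $0^{3}\mathbf{s}j$, starting from each $q\in\{q_0,q_1,q_2\}$. This gives three preimage words per case, whose $\nu_M(\cdot\mid\eta_1=0)$-probabilities are explicit products of entries of $P$. As in the proof of Theorem~\ref{thm:inv_meas_C2}, the parts common to all three preimages cancel in the ratio, so only the few boundary symbols matter.

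The natural first choice is the shortest case, $n=1$, with $\mathbf{s}=0$ or $\mathbf{s}=1$ and $j=0$ or $j=1$. I expect the resulting ratio to be a rational function of $p$ and $r$ that equals $p$ (respectively $r$) identically only on the line $r=1-p$, i.e.\ the equation should factor as $(r+p-1)\cdot(\text{positive factor})=0$ on $(0,1)^2$.

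The main obstacle is choosing $\mathbf{s},j$ so that the three preimages differ enough to break the Markov property. For BBS-S(2) I would try $j=1$ with $\mathbf{s}=0$, since the skip-one-site rule should create a $1\,0\,1$-type preimage whose probability involves $(1-r)(1-p)$ rather than $r^2$. For BBS-V(2) I would try a similar word such as $\mathbf{s}=1$, $j=0$. If the first choice happens to satisfy \eqref{eq:equiv_condition} identically, I would move to $n=2$. The reduction through Theorem~\ref{thm:inv_meas} guarantees that a single violating word suffices, so this is a short finite computation for each model. It remains to check that the factor multiplying $(r+p-1)$ does not vanish for $p,r\in(0,1)$.
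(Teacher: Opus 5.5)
Your ``if'' direction is exactly the paper's argument. For the ``only if'' direction you use the necessity half of Theorem~\ref{thm:inv_meas} and run the inverse automata on $0^3\mathbf{s}j^2$ and $0^3\mathbf{s}j$. That is sound, since every finite word has positive $\nu_M$-probability and \eqref{eq:if_part} uses neither invariance nor \eqref{eq:equiv_condition}. It is also essentially the paper's method: the paper computes conditional probabilities of $\tilde\eta$ on short windows with the same preimage calculus.

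\textbf{The gap.} The ``only if'' direction is never actually proved. Its whole content is exhibiting one violating pair $(\mathbf{s},j)$ and computing its ratio, and you only state an expectation. Worse, the candidate you single out for BBS-S(2), $\mathbf{s}=0$, $j=1$, does not violate \eqref{eq:equiv_condition}:
\begin{itemize}
\item The preimages $\tilde\psi(0^3011;q_k)$ for $k=0,1,2$ are $000110$, $000111$, $001011$.
\item The preimages $\tilde\psi(0^301;q_k)$ are $00100$, $00101$, $00011$.
\item The summed conditional probabilities are $p(1-p)r(1-r+pr)$ and $p(1-p)(1-r+pr)$, so the ratio equals $r$ for all $p,r$.
\end{itemize}
The pairs $(\mathbf{s},j)=(0,0)$ and $(1,0)$ also satisfy \eqref{eq:equiv_condition} identically for BBS-S(2). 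So the $101$-type preimage is not what breaks the Markov property.

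\textbf{What closes the gap.} For BBS-S(2), take $\mathbf{s}=1$, $j=1$. The preimages of $0^3111$ are $010110$, $010111$, $001111$, and those of $0^311$ are $00110$, $00111$, $01011$. This gives
\[
\frac{(1-p)r\,[(1-p)(1-r)+pr^2]}{(1-p)r\,(1-r+pr)}
= r+\frac{(1-r)(1-p-r)}{1-r+pr},
\]
which equals $r$ if and only if $r=1-p$. The paper instead uses the window $0^21^20$ with $j=0$ and obtains $(1-r+pr)/(1+r)$.

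For BBS-V(2), your choice $\mathbf{s}=1$, $j=0$ does work. The preimages of $0^3100$ are $010000$, $010001$, $010010$, and those of $0^310$ are $01000$, $01001$, $00110$. The ratio is
\[
\frac{p(1-p)(1-r)(1-r+pr)}{p(1-p)(1-r)(1+r)}=\frac{1-r+pr}{1+r},
\]
which equals $p$ if and only if $r=1-p$. This is exactly the paper's computation for the window $0^310$.

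With these two explicit computations in place, your argument is complete.
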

\begin{proof}
    From Tables~\ref{tab:S2} and~\ref{tab:V2}, the automata of \textup{BBS-S(2)} and \textup{BBS-V(2)} satisfy~\ref{enum:bijective}, \ref{enum:unique_final_state}, and \ref{enum:reset_word} with $q^* = q_0$ and $N = 2$. Moreover, they are both particle-preserving with the same weight function
    \begin{equation}
    \left\{
        \begin{aligned}
            &w(0)=w(q_0)=0 \\
            &w(1) = w(q_1) = w(q_2)=1.
        \end{aligned}
    \right.
    \end{equation}
    Indeed, Tables~\ref{tab:particle_preservingS(2)} and~\ref{tab:particle_preservingV(2)} verify the particle-preserving property for BBS-S(2) and BBS-V(2). Therefore, by Theorem~\ref{thm:Ber_inv_meas}, $\nu_p$ is an invariant measure for BBS-S(2) and BBS-V(2) for $p\in(0,1)$.
    
    \begin{table}[htbp]
    \caption{Verification of particle preservation for BBS-S(2)}\label{tab:particle_preservingS(2)}
    \centering
    \begin{tabular}{c|c|c}
    \toprule
    Transition $(q,s)\mapsto(q',s')=(\varphi(q,s),\psi(q,s))$ & $w(q)+w(s)$ & $w(q')+w(s')$\\
    \midrule
    $(q_0,0)\mapsto(q_0,0)$ & $0$ & $0$ \\
    $(q_1,0)\mapsto(q_2,0)$ & $1$ & $1$ \\
    $(q_2,0)\mapsto(q_0,1)$ & $1$ & $1$ \\
    $(q_0,1)\mapsto(q_1,0)$ & $1$ & $1$ \\
    $(q_1,1)\mapsto(q_2,1)$ & $2$ & $2$ \\
    $(q_2,1)\mapsto(q_1,1)$ & $2$ & $2$ \\
    \bottomrule
    \end{tabular}
    \end{table}
    
    \begin{table}[H]
    \caption{Verification of particle preservation for BBS-V(2)}
    \label{tab:particle_preservingV(2)}
    \centering
    \begin{tabular}{c|c|c}
    \toprule
    Transition $(q,s)\mapsto(q',s')=(\varphi(q,s),\psi(q,s))$ & $w(q)+w(s)$&$w(q')+w(s')$ \\
    \midrule
    $(q_0,0)\mapsto(q_0,0)$ & $0$ & $0$ \\
    $(q_1,0)\mapsto(q_2,0)$ & $1$ & $1$ \\
    $(q_2,0)\mapsto(q_0,1)$ & $1$ & $1$ \\
    $(q_0,1)\mapsto(q_1,0)$ & $1$ & $1$ \\
    $(q_1,1)\mapsto(q_1,1)$ & $2$ & $2$ \\
    $(q_2,1)\mapsto(q_2,1)$ & $2$ & $2$ \\
    \bottomrule
    \end{tabular}
    \end{table}
    Next, we show that if the input process $\eta$ is a Markov chain with $r\neq1-p$, then the output process $\tilde{\eta}$ is not a Markov chain. To prove this, we show that the transition probability to the next state $\tilde{\eta}_{n+1}$ depends not only on the current state $\tilde{\eta}_n$ but also on the past.
    
        \emph{Case 1:} BBS-S(2) \\
        For BBS-S(2), a direct computation shows that
        \begin{align}
            \nu_M(\tilde{\eta}_{n+1}=0\mid \tilde{\eta}_{n-4}\cdots\tilde{\eta}_n=0^5)&= p \label{eq:0^6}\\
            \nu_M(\tilde{\eta}_{n+1}=0\mid \tilde{\eta}_{n-4}\cdots\tilde{\eta}_n=0^21^20 )&= \frac{pr-r+1}{r+1}.\label{eq:0^31^20}
        \end{align}
        For~\eqref{eq:0^6}, we have
        \begin{equation}
        \left\{
        \begin{aligned}
            \tilde{\psi}(0^6;q_0) &= 0^6 \\
            \tilde{\psi}(0^6;q_1) &= 0^51 \\
            \tilde{\psi}(0^6;q_2) &= 0^410 \\
            \tilde{\psi}(0^5;q_0) &= 0^5 \\
            \tilde{\psi}(0^5;q_1) &= 0^41 \\
            \tilde{\psi}(0^5;q_2) &= 0^310.
        \end{aligned}
        \right.
        \end{equation}
        Therefore, we obtain 
        \begin{align}
            \nu_M(\tilde{\eta}_{n+1}=0\mid \tilde{\eta}_{n-4}\cdots\tilde{\eta}_n=0^5)
            &= \frac{\sum_{k=0}^2\nu_M(\tilde{\eta}_{n-4}\cdots\tilde{\eta}_{n+1}=0^6,Q_{n+1} = q_k)}{\sum_{k=0}^2\nu_M(\tilde{\eta}_{n-4}\cdots\tilde{\eta}_{n}=0^5, Q_n = q_k)} \\
            &=\frac{p^5+p^4(1-p)+p^3(1-p)(1-r)}{p^4+p^3(1-p)+p^2(1-p)(1-r)}\\
            &= p.
        \end{align}
        For~\eqref{eq:0^31^20}, on the other hand, from
        \begin{equation}
            \left\{
            \begin{aligned}
                \tilde{\psi}(0^21^20^2;q_0) &= 01^20^3 \\
                \tilde{\psi}(0^21^20^2;q_1) &= 01^20^21 \\
                \tilde{\psi}(0^21^20^2;q_2) &= 01^2010 \\
                \tilde{\psi}(0^21^20;q_0) &= 01^20^2 \\
                \tilde{\psi}(0^21^20;q_1) &= 01^201 \\
                \tilde{\psi}(0^21^20;q_2) &= 01^30,
            \end{aligned}
            \right.
        \end{equation}
        we obtain
        \begin{align}
            \nu_M(\tilde{\eta}_{n+1}=0\mid \tilde{\eta}_{n-4}\cdots\tilde{\eta}_n=0^21^20 )
            &= \frac{p^2(1-p)r(1-r) + p(1-p)^2r(1-r) + (1-p)^2r(1-r)^2}{p(1-p)r(1-r) + (1-p)^2r(1-r) + (1-p)r^2(1-r)}\\
            &= \frac{pr-r+1}{r+1}.
        \end{align}
        
        \emph{Case 2:} BBS-V(2) \\
        For BBS-V(2), the following identities hold:
        \begin{align}
            \nu_M(\tilde{\eta}_{n+1}=0\mid \tilde{\eta}_{n-4}\cdots\tilde{\eta}_n=0^5)&= p\label{eq:0^5}\\
            \nu_M(\tilde{\eta}_{n+1}=0\mid \tilde{\eta}_{n-4}\cdots\tilde{\eta}_n=0^310)&= \frac{pr-r+1}{r+1}.\label{eq:0^310}
        \end{align}
        For~\eqref{eq:0^5}, we have
        \begin{equation}
            \left\{
            \begin{aligned}
                \tilde{\psi}(0^6;q_0) &= 0^6 \\
                \tilde{\psi}(0^6;q_1) &= 0^51 \\
                \tilde{\psi}(0^6;q_2) &= 0^410 \\
                \tilde{\psi}(0^5;q_0) &= 0^5 \\
                \tilde{\psi}(0^5;q_1) &= 0^41 \\
                \tilde{\psi}(0^5;q_2) &= 0^310.
            \end{aligned}
            \right.
        \end{equation}
        We obtain 
        \begin{align}
            \nu_M(\tilde{\eta}_{n+1}=0\mid \tilde{\eta}_{n-4}\cdots\tilde{\eta}_n=0^5)
            &= \frac{p^5 + p^4(1-p) + p^3(1-p)(1-r)}{p^4 + p^3(1-p) + p^2(1-p)(1-r)} \\
            &= p.
        \end{align}
        For~\eqref{eq:0^310}, we have
        \begin{equation}
        \left\{
            \begin{aligned}
                \tilde{\psi}(0^310^2;q_0) &=010^4 \\
                \tilde{\psi}(0^310^2;q_1) &=010^31 \\
                \tilde{\psi}(0^310^2;q_2) &=010^210 \\
                \tilde{\psi}(0^310;q_0) &=010^3 \\
                \tilde{\psi}(0^310;q_1) &=010^21 \\
                \tilde{\psi}(0^310;q_2) &=0^21^20.
            \end{aligned}
        \right.
        \end{equation}
        Thus, we obtain
        \begin{align}
            \nu_M(\tilde{\eta}_{n+1}=0\mid \tilde{\eta}_{n-4}\cdots\tilde{\eta}_n=0^310)
            &= \frac{p^3(1-p)(1-r) + p^2(1-p)^2(1-r) + p(1-p)^2(1-r)^2}{p^2(1-p)(1-r) + p(1-p)^2(1-r) + p(1-p)r(1-r)}\\
            &=\frac{pr-r+1}{r+1}.
        \end{align}
    Since
    \begin{equation}
        \frac{pr-r+1}{r+1} = p \Leftrightarrow r = 1-p,
    \end{equation}
    the condition $r=1-p$ is necessary in both cases for $\tilde{\eta}$ to be a Markov process.
\end{proof}

\section*{Acknowledgments}

We thank Makiko Sasada for fruitful discussions and helpful comments on the manuscript. We are also grateful to Satoshi Tsujimoto for insightful comments on integrable systems. We thank Hayate Suda and Kanta Nozawa for their valuable feedback during seminars.
T.K. acknowledges support from the Forefront Physics and Mathematics Program to Drive Transformation (FoPM), a World-leading Innovative Graduate Study (WINGS) Program, at the University of Tokyo.

\appendix

\section{Proof of Proposition~\ref{prop:Mar_dist_supp}}\label{app:Mar_dist_supp}

\begin{proof}
Since the stationary distribution of $\{\eta_n\}_{n\in \mathbb{Z}}$ is reversible, the sequences $\{\eta_n\}_{n\in\mathbb{Z}}$ and $\{\eta_{-n}\}_{n\in\mathbb{Z}}$ have the same distribution. Therefore, the proof for $\Omega_-$ follows immediately from that for $\Omega_+$. Then, it suffices to show that for any $N\in \mathbb{Z}_{\geq1}$ and $M\in \mathbb{Z}_{\geq 1}$,
\begin{equation}
\nu_M\left(
\bigcap_{m = M}^\infty A_{m, N}^c
\right)
 = 0.
\end{equation}
By stationarity, this is equivalent to 
\begin{equation}
    \nu_M\left(
    \bigcap_{m  =0}^\infty A_{m,N}^c\right) = 0.
\end{equation}
We define $\delta$ as the minimum conditional probability of $A_{m, N}$ given the state preceding the block:
\begin{equation}
    \delta = \nu_M(A_{m,N} \mid \eta_{m-1} = 0)\wedge \nu_M(A_{m,N}\mid \eta_{m-1}=1).
\end{equation}
Note that $\delta$ is a positive constant independent of $m$. For $K \in \mathbb{Z}_{\geq 1}$, we have
\begin{equation}
    \nu_M\left(A_{KN, N}^c\left| \bigcap_{k = 0}^{K-1} A_{kN, N}^c\right. \right) \leq 1-\delta .
\end{equation}
Therefore,
\begin{align}
    \nu_M\left( \bigcap_{m = 0}^\infty A_{m,N}^c \right)
    &\leq \lim_{K \to \infty} \nu_M\left(\bigcap_{k = 0}^K A_{kN, N}^c \right)\\
    &= \lim_{K \to \infty}\nu_M(A_{0,N}^c)\prod_{k=1}^K \nu_M\left(A_{kN, N}^c\left|\bigcap_{l=0}^{k-1}A_{lN, N}^c \right.\right) \\
    &\leq \lim_{K \to \infty} (1-\delta)^{K} \\
    &= 0.
\end{align}
\end{proof}

\section{Proof of Remark~\ref{rem:Omega_ext}}
\label{app:Omega_ext}

\begin{proof}
    First we show that $\Omega_{\text{ext}}\subset \Omega_{\mathscr{A}}$ and $T_\mathscr{A}(\Omega_\text{ext})\subset \Omega_\text{ext}$ for $\mathscr{A}=\mathscr{A}_{\textup{BBS-C(2)}}$. $\Omega_\text{ext}\subset \Omega_\mathscr{A}$ holds since $\mathscr{A}$ satisfies~\ref{enum:reset_word} with $N=2$. To prove the second inclusion, fix $\eta\in \Omega_\text{ext}$ and set $\tilde{\eta} = T_{\mathscr{A}}(\eta)$. We need to show that $\tilde{\eta} \in \Omega_{\text{ext}}$. Let $Q_n$ denote the state of the carrier after passing site $n$. It suffices to show that for any positive integer $M$, there exists an integer $m\geq M$ such that $\tilde{\eta}_m = \tilde{\eta}_{m + 1} = 0$. The proof for the negative direction is analogous. If $\eta$ is eventually zero to the right, then the claim is immediate because the carrier returns to the initial state $q_0$ after reading the reset word $``00"$. Hence we may assume that $\eta$ contains infinitely many $1$'s to the right. Since $\eta \in \Omega_{\text{ext}}$, there exists an integer $m_0\geq M$ such that $\eta_{m_0} = \eta_{m_0 + 1} =0$ and $\eta_{m_0 + 2} = 1$. Let $m_1$ be the smallest integer $n> m_0$ such that $\eta_n=\eta_{n + 1} = 0$. We consider the following three cases according to the carrier state $Q_{m_1-1}$ upon arriving at site $m_1$.
    \begin{itemize}
        \item $Q_{m_1-1} = q_0$:\\
        In this case, $\eta_{m_1-1}$ must be 0, which contradicts the minimality of $m_1$. Hence this case cannot occur.
        \item $Q_{m_1-1} = q_1$:\\
        In this case, we have $\tilde{\eta}_{m_1+1} = 0$ and the carrier state after passing site $m_1 + 1$ is $Q_{m_1+1} = q_0$. Therefore, it follows that $\tilde{\eta}_{m_1+2} = 0$.
        \item $Q_{m_1-1} = q_2$:\\
        Let $n_0$ be the smallest integer $n\geq m_0$ such that $Q_n = q_2$. The output at this site is $\tilde{\eta}_{n_0} = 0$, and the state of the carrier upon arriving at $n_0$ is $Q_{n_0-1} = q_1$. Noting that the carrier cannot be in state $q_1$ for two consecutive sites, we have $Q_{n_0-2}=q_0$. It follows that the carrier must have collected balls at sites $n_0-1$ and $n_0$, and thus we see that $\tilde{\eta}_{n_0-1} = \tilde{\eta}_{n_0} = 0$.
    \end{itemize}
    In each case, we have shown the existence of the pattern $``00"$ in the evolved configuration $\tilde{\eta}$ at a position greater than or equal to $M$.
        
    Next, we show that the configuration space $\Omega_{\textup{ext}}$ is not closed under the time evolution of \textup{BBS-S(2)} and \textup{BBS-V(2)}. We first note that the inclusion $\Omega_{\textup{ext}}\subset\Omega_\mathscr{A}$ still holds for $\textup{BBS-S(2)}$ and $\textup{BBS-V(2)}$ since the carrier returns to $q_0$ after reading $``00"$. However, the following examples in Figs.~\ref{fig:S(2)} and \ref{fig:V(2)} illustrate that $``00"$ sequences are not necessarily conserved under the time evolution. In each example, the configuration at time $t=0$ contains infinitely many occurrences of $``00"$. However, after one time step, the configuration at time $t=1$ contains only a finite number of $``00"$ sequences. This shows that the configuration space $\Omega_{\textup{ext}}$ is not closed under the time evolution of \textup{BBS-S(2)} and \textup{BBS-V(2)}.
    
    \begin{figure}[htbp]
        \centering
        \[
        \begin{array}{ll}
            \text{t=0:} & \texttt{001110011100111001110011100111} \\
            \text{t=1:} & \texttt{010110101101011010110101101011} \\
        \end{array}
        \]
        \caption{An example illustrating the non-conservation of $``00"$ sequences in BBS-S(2)}
        \label{fig:S(2)}
    \end{figure}
    
    \begin{figure}[htbp]
        \centering
        \[
        \begin{array}{ll}
            \text{t=0:} & \texttt{001100110011001100110011001100} \\
            \text{t=1:} & \texttt{010101010101010101010101010101} \\
        \end{array}
        \]
        \caption{An example illustrating the non-conservation of $``00"$ sequences in BBS-V(2)}
        \label{fig:V(2)}
    \end{figure}
\end{proof}


\begin{thebibliography}{13}%
\makeatletter
\providecommand \@ifxundefined [1]{%
 \@ifx{#1\undefined}
}%
\providecommand \@ifnum [1]{%
 \ifnum #1\expandafter \@firstoftwo
 \else \expandafter \@secondoftwo
 \fi
}%
\providecommand \@ifx [1]{%
 \ifx #1\expandafter \@firstoftwo
 \else \expandafter \@secondoftwo
 \fi
}%
\providecommand \natexlab [1]{#1}%
\providecommand \enquote  [1]{``#1''}%
\providecommand \bibnamefont  [1]{#1}%
\providecommand \bibfnamefont [1]{#1}%
\providecommand \citenamefont [1]{#1}%
\providecommand \href@noop [0]{\@secondoftwo}%
\providecommand \href [0]{\begingroup \@sanitize@url \@href}%
\providecommand \@href[1]{\@@startlink{#1}\@@href}%
\providecommand \@@href[1]{\endgroup#1\@@endlink}%
\providecommand \@sanitize@url [0]{\catcode `\\12\catcode `\$12\catcode `\&12\catcode `\#12\catcode `\^12\catcode `\_12\catcode `\%12\relax}%
\providecommand \@@startlink[1]{}%
\providecommand \@@endlink[0]{}%
\providecommand \url  [0]{\begingroup\@sanitize@url \@url }%
\providecommand \@url [1]{\endgroup\@href {#1}{\urlprefix }}%
\providecommand \urlprefix  [0]{URL }%
\providecommand \Eprint [0]{\href }%
\providecommand \doibase [0]{https://doi.org/}%
\providecommand \selectlanguage [0]{\@gobble}%
\providecommand \bibinfo  [0]{\@secondoftwo}%
\providecommand \bibfield  [0]{\@secondoftwo}%
\providecommand \translation [1]{[#1]}%
\providecommand \BibitemOpen [0]{}%
\providecommand \bibitemStop [0]{}%
\providecommand \bibitemNoStop [0]{.\EOS\space}%
\providecommand \EOS [0]{\spacefactor3000\relax}%
\providecommand \BibitemShut  [1]{\csname bibitem#1\endcsname}%
\let\auto@bib@innerbib\@empty
\bibitem [{\citenamefont {Novikov}\ \emph {et~al.}(1984)\citenamefont {Novikov}, \citenamefont {Manakov}, \citenamefont {Pitaevskii},\ and\ \citenamefont {Zakharov}}]{novikov1984theory}%
  \BibitemOpen
  \bibfield  {author} {\bibinfo {author} {\bibfnamefont {S.}~\bibnamefont {Novikov}}, \bibinfo {author} {\bibfnamefont {S.~V.}\ \bibnamefont {Manakov}}, \bibinfo {author} {\bibfnamefont {L.~P.}\ \bibnamefont {Pitaevskii}},\ and\ \bibinfo {author} {\bibfnamefont {V.~E.}\ \bibnamefont {Zakharov}},\ }\href@noop {} {\emph {\bibinfo {title} {Theory of solitons: the inverse scattering method}}}\ (\bibinfo  {publisher} {Springer},\ \bibinfo {address} {New York},\ \bibinfo {year} {1984})\BibitemShut {NoStop}%
\bibitem [{\citenamefont {Walters}(2000)}]{walters2000introduction}%
  \BibitemOpen
  \bibfield  {author} {\bibinfo {author} {\bibfnamefont {P.}~\bibnamefont {Walters}},\ }\href@noop {} {\emph {\bibinfo {title} {An introduction to ergodic theory}}},\ Vol.~\bibinfo {volume} {79}\ (\bibinfo  {publisher} {Springer},\ \bibinfo {address} {New York},\ \bibinfo {year} {2000})\BibitemShut {NoStop}%
\bibitem [{\citenamefont {Friedli}\ and\ \citenamefont {Velenik}(2017)}]{friedli2017statistical}%
  \BibitemOpen
  \bibfield  {author} {\bibinfo {author} {\bibfnamefont {S.}~\bibnamefont {Friedli}}\ and\ \bibinfo {author} {\bibfnamefont {Y.}~\bibnamefont {Velenik}},\ }\href {https://doi.org/10.1017/9781316882603} {\emph {\bibinfo {title} {Statistical mechanics of lattice systems: a concrete mathematical introduction}}}\ (\bibinfo  {publisher} {Cambridge University Press},\ \bibinfo {address} {Cambridge},\ \bibinfo {year} {2017})\BibitemShut {NoStop}%
\bibitem [{\citenamefont {Spohn}(2020)}]{spohn2020generalized}%
  \BibitemOpen
  \bibfield  {author} {\bibinfo {author} {\bibfnamefont {H.}~\bibnamefont {Spohn}},\ }\bibfield  {title} {\bibinfo {title} {Generalized {G}ibbs ensembles of the classical {T}oda chain},\ }\href {https://doi.org/10.1007/s10955-019-02320-5} {\bibfield  {journal} {\bibinfo  {journal} {Journal of Statistical Physics}\ }\textbf {\bibinfo {volume} {180}},\ \bibinfo {pages} {4} (\bibinfo {year} {2020})}\BibitemShut {NoStop}%
\bibitem [{\citenamefont {Kuniba}\ \emph {et~al.}(2020)\citenamefont {Kuniba}, \citenamefont {Misguich},\ and\ \citenamefont {Pasquier}}]{kuniba2020generalized}%
  \BibitemOpen
  \bibfield  {author} {\bibinfo {author} {\bibfnamefont {A.}~\bibnamefont {Kuniba}}, \bibinfo {author} {\bibfnamefont {G.}~\bibnamefont {Misguich}},\ and\ \bibinfo {author} {\bibfnamefont {V.}~\bibnamefont {Pasquier}},\ }\bibfield  {title} {\bibinfo {title} {Generalized hydrodynamics in box-ball system},\ }\href {https://doi.org/10.1088/1751-8121/abadb9} {\bibfield  {journal} {\bibinfo  {journal} {Journal of Physics A: Mathematical and Theoretical}\ }\textbf {\bibinfo {volume} {53}},\ \bibinfo {pages} {404001} (\bibinfo {year} {2020})}\BibitemShut {NoStop}%
\bibitem [{\citenamefont {Doyon}(2019)}]{doyon2019generalized}%
  \BibitemOpen
  \bibfield  {author} {\bibinfo {author} {\bibfnamefont {B.}~\bibnamefont {Doyon}},\ }\bibfield  {title} {\bibinfo {title} {Generalized hydrodynamics of the classical {T}oda system},\ }\href {https://doi.org/10.1063/1.5096892} {\bibfield  {journal} {\bibinfo  {journal} {Journal of Mathematical Physics}\ }\textbf {\bibinfo {volume} {60}},\ \bibinfo {pages} {073302} (\bibinfo {year} {2019})}\BibitemShut {NoStop}%
\bibitem [{\citenamefont {Takahashi}\ and\ \citenamefont {Satsuma}(1990)}]{19903514}%
  \BibitemOpen
  \bibfield  {author} {\bibinfo {author} {\bibfnamefont {D.}~\bibnamefont {Takahashi}}\ and\ \bibinfo {author} {\bibfnamefont {J.}~\bibnamefont {Satsuma}},\ }\bibfield  {title} {\bibinfo {title} {A soliton cellular automaton},\ }\href {https://doi.org/10.1143/JPSJ.59.3514} {\bibfield  {journal} {\bibinfo  {journal} {Journal of the Physical Society of Japan}\ }\textbf {\bibinfo {volume} {59}},\ \bibinfo {pages} {3514} (\bibinfo {year} {1990})}\BibitemShut {NoStop}%
\bibitem [{\citenamefont {Takahashi}\ and\ \citenamefont {Matsukidaira}(1997)}]{DaisukeTakahashi_1997}%
  \BibitemOpen
  \bibfield  {author} {\bibinfo {author} {\bibfnamefont {D.}~\bibnamefont {Takahashi}}\ and\ \bibinfo {author} {\bibfnamefont {J.}~\bibnamefont {Matsukidaira}},\ }\bibfield  {title} {\bibinfo {title} {Box and ball system with a carrier and ultradiscrete modified {K}d{V} equation},\ }\href {https://doi.org/10.1088/0305-4470/30/21/005} {\bibfield  {journal} {\bibinfo  {journal} {Journal of Physics A: Mathematical and General}\ }\textbf {\bibinfo {volume} {30}},\ \bibinfo {pages} {L733} (\bibinfo {year} {1997})}\BibitemShut {NoStop}%
\bibitem [{\citenamefont {Croydon}\ \emph {et~al.}(2023)\citenamefont {Croydon}, \citenamefont {Kato}, \citenamefont {Sasada},\ and\ \citenamefont {Tsujimoto}}]{croydon2023dynamics}%
  \BibitemOpen
  \bibfield  {author} {\bibinfo {author} {\bibfnamefont {D.}~\bibnamefont {Croydon}}, \bibinfo {author} {\bibfnamefont {T.}~\bibnamefont {Kato}}, \bibinfo {author} {\bibfnamefont {M.}~\bibnamefont {Sasada}},\ and\ \bibinfo {author} {\bibfnamefont {S.}~\bibnamefont {Tsujimoto}},\ }\href {https://doi.org/10.1090/memo/1398} {\emph {\bibinfo {title} {Dynamics of the box-ball system with random initial conditions via Pitman’s transformation}}},\ Vol.\ \bibinfo {volume} {283}\ (\bibinfo  {publisher} {American Mathematical Society},\ \bibinfo {address} {Providence, RI},\ \bibinfo {year} {2023})\BibitemShut {NoStop}%
\bibitem [{\citenamefont {Ferrari}\ \emph {et~al.}(2021)\citenamefont {Ferrari}, \citenamefont {Nguyen}, \citenamefont {Rolla},\ and\ \citenamefont {Wang}}]{ferrari2021soliton}%
  \BibitemOpen
  \bibfield  {author} {\bibinfo {author} {\bibfnamefont {P.~A.}\ \bibnamefont {Ferrari}}, \bibinfo {author} {\bibfnamefont {C.}~\bibnamefont {Nguyen}}, \bibinfo {author} {\bibfnamefont {L.~T.}\ \bibnamefont {Rolla}},\ and\ \bibinfo {author} {\bibfnamefont {M.}~\bibnamefont {Wang}},\ }\bibfield  {title} {\bibinfo {title} {Soliton decomposition of the box-ball system},\ }\href {https://doi.org/10.1017/fms.2021.49} {\bibfield  {journal} {\bibinfo  {journal} {Forum of Mathematics, Sigma}\ }\textbf {\bibinfo {volume} {9}},\ \bibinfo {pages} {e60} (\bibinfo {year} {2021})}\BibitemShut {NoStop}%
\bibitem [{\citenamefont {Croydon}\ and\ \citenamefont {Sasada}(2021)}]{Croydon2021}%
  \BibitemOpen
  \bibfield  {author} {\bibinfo {author} {\bibfnamefont {D.~A.}\ \bibnamefont {Croydon}}\ and\ \bibinfo {author} {\bibfnamefont {M.}~\bibnamefont {Sasada}},\ }\bibfield  {title} {\bibinfo {title} {Generalized hydrodynamic limit for the box--ball system},\ }\href {https://doi.org/10.1007/s00220-020-03914-x} {\bibfield  {journal} {\bibinfo  {journal} {Communications in Mathematical Physics}\ }\textbf {\bibinfo {volume} {383}},\ \bibinfo {pages} {427} (\bibinfo {year} {2021})}\BibitemShut {NoStop}%
\bibitem [{\citenamefont {Mealy}(1955)}]{Mealy1955}%
  \BibitemOpen
  \bibfield  {author} {\bibinfo {author} {\bibfnamefont {G.~H.}\ \bibnamefont {Mealy}},\ }\bibfield  {title} {\bibinfo {title} {A method for synthesizing sequential circuits},\ }\href {https://doi.org/10.1002/j.1538-7305.1955.tb03788.x} {\bibfield  {journal} {\bibinfo  {journal} {The Bell System Technical Journal}\ }\textbf {\bibinfo {volume} {34}},\ \bibinfo {pages} {1045} (\bibinfo {year} {1955})}\BibitemShut {NoStop}%
\bibitem [{\citenamefont {Maeno}\ \emph {et~al.}(2025)\citenamefont {Maeno}, \citenamefont {Tsujimoto},\ and\ \citenamefont {Yura}}]{maeno2025solitons}%
  \BibitemOpen
  \bibfield  {author} {\bibinfo {author} {\bibfnamefont {A.}~\bibnamefont {Maeno}}, \bibinfo {author} {\bibfnamefont {S.}~\bibnamefont {Tsujimoto}},\ and\ \bibinfo {author} {\bibfnamefont {F.}~\bibnamefont {Yura}},\ }\bibfield  {title} {\bibinfo {title} {Solitons in 3-state {M}ealy automata},\ }\href {https://doi.org/10.1016/j.physd.2025.134857} {\bibfield  {journal} {\bibinfo  {journal} {Physica D: Nonlinear Phenomena}\ }\textbf {\bibinfo {volume} {481}},\ \bibinfo {pages} {134857} (\bibinfo {year} {2025})}\BibitemShut {NoStop}%
\end{thebibliography}
\end{document}